\numberwithin{equation}{section} 
\theoremstyle{thmstyleone}%
\newtheorem{theorem}{Theorem}[section]%  
\newtheorem{proposition}[theorem]{Proposition}% 
\newtheorem{corollary}[theorem]{Corollary}
\newtheorem{lemma}[theorem]{Lemma}% 
\theoremstyle{thmstyletwo}%
\newtheorem{remark}[theorem]{Remark}%
\def\red#1{{#1}}
\newcommand{\be}{\begin{equation}}
\newcommand{\ee}{\end{equation}}
\newcommand{\ba}{\begin{align}}
\newcommand{\bas}{\begin{align*}}
\newcommand{\noret}{\nonumber \\ }
\newcommand{\ul}[1]{\underline{#1}}
\newcommand{\bR}{\mathbb{R}}  %% the notation of real field
\newcommand{\bN}{\mathbb{N}}
\newcommand{\cP}{\mathcal{P}} %% The space of probability distributions
\newcommand{\cS}{\mathcal{S}} %% The space of strictly positive quantum states
\newcommand{\tilalpha}{\tilde{\alpha}}
\newcommand{\lift}[1]{{(#1)}^\sharp}
\newcommand{\id}{\mathit{id}}
\newcommand{\Cov}{\mathrm{Cov}}
\newcommand{\vf}{\mathfrak{X}} %The set of vector fields 
\newcommand{\cD}{\mathfrak{D}} %The set of 1-forms
\newcommand{\transposedmap}[1]{{}^{\mathrm{t}}{#1}}
\newcommand{\transpose}{\transposedmap}
\newcommand{\kernel}{\mathrm{Ker}\,}
\newcommand{\any}{\forall}
\newcommand{\some}{\exists}
\newcommand{\condset}[2]{\{#1 \,\vert\, #2\}}
\newcommand{\bigcondset}[2]{\Bigl\{#1 \,\Big\vert \, #2\Bigr\}}
\newcommand{\nablastar}{\nabla^*} 
\newcommand{\nablae}{\nabla^{(\mathrm{e})}}
\newcommand{\nablaesub}[1]{\nabla^{(\mathrm{e})}_{#1}}
\newcommand{\nablam}{\nabla^{(\mathrm{m})}}
\newcommand{\nablamsub}[1]{\nabla^{(\mathrm{m})}_{#1}}
\newcommand{\rmm}{\mathrm{m}}
\newcommand{\rme}{\mathrm{e}}
\renewcommand{\tan}[1]{T_{#1}}
\newcommand{\cotan}[1]{T^*_{#1}}
\renewcommand{\tanm}[1]{T^{(\mathrm{m})}_{#1}}
\newcommand{\expect}[1]{\langle #1\rangle}
\newcommand{\inprod}[2]{\langle #1, #2\rangle}  
\newcommand{\norm}[1]{\| #1 \|} %norm
\newcommand{\restrict}{\vert} % restriction (symbol | cannot be used in mathmode)
\newcommand{\subm}{submanifold}
\newcommand{\correspond}[1]{\stackrel{#1}{\longleftrightarrow}}
\newcommand{\memo}[1]  %memo
{\bigskip \par\noindent
\fbox{\parbox[t]{12cm} 
{#1}
} \par\bigskip\noindent
}
\begin{document}

\title[The Fisher metric on the cotangent bundle]{The Fisher metric as a metric on the cotangent bundle}

%%=============================================================%%
%% Prefix	-> \pfx{Dr}
%% GivenName	-> \fnm{Joergen W.}
%% Particle	-> \spfx{van der} -> surname prefix
%% FamilyName	-> \sur{Ploeg}
%% Suffix	-> \sfx{IV}
%% NatureName	-> \tanm{Poet Laureate} -> Title after name
%% Degrees	-> \dgr{MSc, PhD}
%% \author*[1,2]{\pfx{Dr} \fnm{Joergen W.} \spfx{van der} \sur{Ploeg} \sfx{IV} \tanm{Poet Laureate} 
%%                 \dgr{MSc, PhD}}\email{iauthor@gmail.com}
%%=============================================================%%

\author[1]{\fnm{Hiroshi} \sur{Nagaoka}}\email{nagaoka@is.uec.ac.jp}

\affil[1]{%\orgdiv{Department}, 
\orgname{The University of Electro-Communications}, \orgaddress{\street{1-5-1 Chofugaoka}, \city{Chofu, Tokyo}, \postcode{182-8585}, %\state{Tokyo}, 
\country{Japan}}}

%\affil*[1]{\orgdiv{Department}, \orgname{Organization}, \orgaddress{\street{Street}, \city{City}, \postcode{100190}, \state{State}, \country{Country}}}

%%==================================%%
%% sample for unstructured abstract %%
%%==================================%%

\abstract{
The Fisher metric on a manifold of probability distributions is usually treated  
as a metric on the tangent bundle.  In this paper, we focus on the metric on 
the cotangent bundle induced from the Fisher metric with calling it the Fisher co-metric. 
We show that the Fisher co-metric can be defined directly without going through the Fisher metric 
by establishing a natural correspondence between cotangent vectors and random variables. 
This definition clarifies a close relation between the Fisher co-metric and the variance/covariance of random variables, 
whereby the Cram\'{e}r-Rao inequality is trivialized. 
We also discuss the monotonicity and the invariance of the Fisher co-metric with respect to Markov maps, 
and present a theorem characterizing the co-metric by the invariance, which can be 
regarded as a cotangent version of 
\v{C}encov's characterization theorem for the Fisher metric. 
The obtained theorem can also \red{be} viewed as giving a characterization of the variance/covariance. 
%The abstract serves both as a general introduction to the topic and as a brief, non-technical summary of the main results and their implications. Authors are advised to check the author instructions for the journal they are submitting to for word limits and if structural elements like subheadings, citations, or equations are permitted.
}

\keywords{information geometry, Fisher metric, cotangent space, \v{C}encov's  (Chentsov's) theorem}

%%\pacs[JEL Classification]{D8, H51}

%%\pacs[MSC Classification]{35A01, 65L10, 65L12, 65L20, 65L70}

\maketitle

\section{Introduction}\label{sec_intro}

The Fisher metric on a statistical manifold (a manifold consisting of 
probability distributions) is one of the most important notions in information geometry 
\cite{amanag}. 
It is usually treated as a Riemannian metric which is a metric on the tangent bundle. 
The subject of the present paper is the metric on the cotangent bundle corresponding to the Fisher metric, 
which we call the \emph{Fisher co-metric}. The Fisher metric and the Fisher co-metric are 
essentially a single geometric object so that one is induced from another.  Nevertheless, studying the Fisher 
co-metric has several implications as mentioned below, which are what the present paper intends to show.

Firstly, as will be seen in Section~\ref{sec_Fisher_co-metric}, 
the Fisher co-metric is defined via the variance/covariance of random variables 
based on a natural correspondence between cotangent vectors and random variables. This definition 
is very natural and does not seem arbitrary. There is no room for questions such as why $\log p$ appears in the definition of the Fisher metric.  

Secondly, the above relationship between cotangent vectors and random variables 
directly links the variance/covariance of an unbiased estimator and the Fisher co-metric, 
which trivializes the Cram\'{e}r-Rao inequality. Recognizing this fact, the Fisher metric appears to be a detour 
for the Cram\'{e}r-Rao inequality, at least conceptually. 

Thirdly, once we focus on the Fisher co-metric, we are motivated to reconsider a known result for the Fisher metric as a 
source of similar problems for the Fisher co-metic and the variance/covariance, which may lead to a new insight. 
As an example,  co-metric and variance/covariance versions of \v{C}encov's theorem 
on characterization of the Fisher metric are investigated in this paper.

The paper is organized as follows. In Section~\ref{sec_Fisher_co-metric}, we introduce the Fisher co-metric 
on the manifold $\cP (\Omega)$, which is the totality of positive probability distributions on a finite set $\Omega$, 
via the variance/covariance of random variables on $\Omega$.  In Section~\ref{sec_metric_from_co-metric}, 
the Fisher co-metric is shown to be equivalent to the Fisher metric by a natural correspondence. 
In Section~\ref{sec_submanifold}, the Fisher metric and co-metric on an arbitrary {\subm} of $\cP$ are 
discussed, where we see that the Cram\'{e}r-Rao inequality is trivialized by considering the co-metric. 
 Section~\ref{sec_em_connections} 
treats the e- and m-connections on $\cP (\Omega)$, where it is clarified that, in application to 
estimation theory, the role of the m-connection as a connection on the cotangent bundle and its relation 
to the Fisher co-metric are crucial.  Sections~\ref{sec_Fisher_co-metric}-\ref{sec_em_connections} 
can be considered to constitute a first half of the paper, which is aimed at showing the naturalness and the usefulness of 
considering the Fisher co-metric.  

The second half of the paper focuses on the monotonicity and the invariance of the Fisher metric and co-metric with respect to Markov maps. 
In Section~\ref{sec_monotonicity}, we investigate the monotonicity. 
We show there that the monotonicity of the Fisher metric, which is well known as a characteristic property of the metric, 
is equivalently translated into the monotonicity of the Fisher co-metric and that of the variance. 
In Section~\ref{sec_invariance}, after reviewing the invariance of the Fisher metric and \v{C}encov's theorem, 
we consider their co-metric versions. It is shown that, being different from the monotonicity, the invariance 
of the metric and that of the co-metric are not logically equivalent. We present a theorem on characterization of the Fisher co-metric in terms of the invariance, which corresponds to  \v{C}encov's theorem but does not follow from it.  
The obtained theorem can also be expressed as a theorem on characterization of the variance/covariance. 
In Section~\ref{sec_strong_invariance}, we investigate a stronger version of the invariance, which can be 
regarded as the joint condition that combines the invariance of the metric and that of the co-metric. The formulation 
used for expressing this condition is applied to affine connections in Section~\ref{sec_invariance_connection}, 
whereby a kind of invariance condition for affine connection is obtained. 
The condition is shown to be equivalent to a known version of invariance condition which is 
seemingly weaker than the original condition used by  \v{C}encov 
to characterize the $\alpha$-connections, but actually characterizes the $\alpha$-connection as well. 
Section~\ref{sec_concluding} is devoted to concluding remarks. 

\begin{remark}
Throughout this paper, we denote the tangent space 
and the cotangent space of a manifold $M$ at a point $p\in M$ by 
$\tan{p} (M)$ and $\cotan{p} (M)$, respectively.  
We also denote the totality of smooth vector fields and that 
of smooth differential $1$-forms on $M$ by $\vf(M)$ and $\cD (M)$, respectively.  
We generally use capital letters $X, Y\red{,} \ldots $ for 
vector fields  in $\vf (M)$, which are maps assigning tangent vectors 
$X_p, Y_p, \ldots $ in $\tan{p} (M)$ to each point $p\in M$. 
To save the symbols, we also denote general tangent vectors in  $\tan{p} (M)$ 
by $X_p, Y_p, \ldots$, not only when they are the values of vector fields. 
Similarly, We use Greek letters $\alpha, \beta, \ldots $ for $1$-forms in $\cD (M)$, 
which are maps assigning cotangent vectors 
$\alpha_p, \beta_p, \ldots $ in $\cotan{p} (M)$ to each point $p\in M$, and 
also denote general cotangent vectors by $\alpha_p, \beta_p, \ldots$, not only when 
they are the values of $1$-forms.  The pairing of $X_p\in \tan{p} (M)$ and 
$\alpha_p\in \cotan{p}(M)$ is expressed as $\alpha_p (X_p)$, considering 
a cotangent vector as a function on the tangent space. 
We keep the first capital letters $A, B, \ldots$ 
for random variables ($\bR$-valued functions on sample spaces). 
\end{remark}

\section{The Fisher co-metric}
\label{sec_Fisher_co-metric}

We introduce the Fisher co-metric in this section, while its equivalence to the Fisher metric will be  
shown in the next section. 

Let $\Omega$ be a finite set with cardinality $\lvert\Omega\rvert\geq 2$, 
and let $\cP (\Omega)$ be the totality of strictly positive probability distributions on $\Omega$: 
\be
\cP=\cP (\Omega) := 
\bigcondset{\,p}{p: \Omega \rightarrow (0, 1), \; \sum_{\omega\in\Omega} p(\omega) =1 \,}, 
\label{def_P(Omega)}
\ee
which is regarded as a manifold with $\dim \cP (\Omega) = \lvert\Omega\rvert -1$. 
Let the totality of $\bR$-valued functions on $\Omega$  be denoted by $\bR^\Omega$, 
and define $\bigl(\bR^\Omega\bigr)_c := \condset{A\in\bR^\Omega}{\sum_{\omega\in\Omega} A(\omega) =c}$ 
for a constant $c\in \bR$.  Since $\cP$ is an open subset of the affine space $\bigl(\bR^\Omega\bigr)_1$, 
its tangent space can be identified with the linear space  $\bigl(\bR^\Omega\bigr)_0$. 
Following the terminology of \cite{amanag}, we denote this identification $\tan{p} (\cP) \rightarrow \bigl(\bR^\Omega\bigr)_0$ by $X_p \mapsto X_p^{(\rmm)}$, and call $X_p^{(\rmm)}$ the \emph{m-representation}  
of $X_p$. 

For an arbitrary {\subm} $M$ of $\cP$ (including the case when $M=\cP$) 
we define $\tanm{p}(M) := \condset{X_p^{(\rmm)}}{X_p \in \tan{p}(M)}$, 
which is a linear subspace of $\tanm{p}(\cP) = \bigl(\bR^\Omega\bigr)_0$.
When the elements of  $M$ are parametrized as $p_\xi$ by a coordinate system $\xi = (\xi^i)$
of $M$, the m-representation of $(\partial_i)_p \in \tan{p}(M)$, where $\partial_i:= \frac{\partial}{\partial\xi^i}$, 
 with $p=p_\xi$ is represented as 
\be
(\partial_i)_{p}^{(\rmm)} = \partial_i p_{\xi}, 
\label{partial_m-rep}
\ee
and $\{ (\partial_i)_{p}^{(\rmm)} \}_{i=1}^n$ ($n=\dim M$) constitute a basis of $\tanm{p}(M)$.

We denote the expectation of a random variable $A\in \bR^\Omega$ w.r.t. a 
distribution $p\in\cP$ by 
\begin{align}
\expect{A}_p &:= \sum_{\omega\in\Omega} p(\omega) A(\omega), 
\end{align}
and define the function 
\be
\expect{A} : \cP \rightarrow \bR, \;\; p \mapsto \expect{A}_p . 
\ee
Since $\expect{A} $ is a smooth function on the manifold $\cP$, 
its differential $(d\expect{A})_p \in \cotan{p} (\cP)$ at each point $p\in\cP$ 
is defined.  We introduce the following map: 
\be
\delta_p : \bR^\Omega \rightarrow \cotan{p} (\cP), \;\; A \mapsto \delta_p (A) :=(d\expect{A})_p \red{,}
\ee
for which we have
\be
\any A\in \bR^\Omega, \, \any X_p\in\tan{p} (\cP), \;\; 
\delta_p (A) (X_p) = X_p \expect{A} = 
\sum_{\omega\in\Omega} X_p^{(\rmm)}(\omega) A (\omega).
\label{delta(A)(X)_Xm}
\ee

\begin{proposition} 
\label{prop_d<A>p_linear_iso}
For every $p\in \cP$, the linear map $\delta_p : \bR^\Omega \rightarrow \cotan{p} (\cP)$ 
is surjective with $\kernel \delta_p =\bR$, where $\bR$ is regarded as a subspace of $\bR^\Omega$ by 
identifying a constant $c\in\bR$ with the constant function 
$\omega\mapsto c$.  Hence, $\delta_p$ induces a linear isomorphism $\bR^\Omega / \bR  \rightarrow \cotan{p} (\cP)$. 
\end{proposition}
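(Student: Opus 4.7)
The plan is to compute $\ker\delta_p$ explicitly and then invoke a dimension count to conclude surjectivity. The surjectivity will follow automatically because $\dim\cotan{p}(\cP) = \dim\cP = \abs{\Omega} - 1$, while $\dim\bR^\Omega = \abs{\Omega}$, so establishing that the kernel is one-dimensional is the only real work.

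First I would verify that $\bR \subseteq \ker\delta_p$. If $A = c$ is a constant function, then $\expect{A}_q = c\sum_\omega q(\omega) = c$ for every $q\in\cP$, so $\expect{A}$ is a constant function on $\cP$ and $\delta_p(A) = (d\expect{A})_p = 0$. This is essentially a one-line check.

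Next I would establish the reverse inclusion $\ker\delta_p \subseteq \bR$, which is where the content lies. Suppose $\delta_p(A) = 0$. By formula \eqref{delta(A)(X)_Xm},
\be
\sum_{\omega\in\Omega} X_p^{(\rmm)}(\omega)\, A(\omega) = 0 \quad \text{for every } X_p\in\tan{p}(\cP).
\ee
Since the m-representation identifies $\tan{p}(\cP)$ with $\bigl(\bR^\Omega\bigr)_0$, this says that $A$ is orthogonal, under the standard inner product on $\bR^\Omega$, to every vector whose components sum to zero. The orthogonal complement of the hyperplane $\bigl(\bR^\Omega\bigr)_0$ is precisely the line spanned by the constant function $1$, so $A$ must be a constant. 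This step is the main (and only nontrivial) obstacle, but it is routine once the m-representation picture is in place.

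Finally I would package this into the conclusion. From the kernel computation and the rank-nullity theorem,
\be
\dim\,\mathrm{im}(\delta_p) = \abs{\Omega} - \dim\ker\delta_p = \abs{\Omega} - 1 = \dim\cotan{p}(\cP),
\ee
so $\delta_p$ is surjective. The induced map $\bR^\Omega/\bR \to \cotan{p}(\cP)$ is then well-defined, injective (by the kernel computation), and surjective, hence a linear isomorphism. No additional machinery is required.
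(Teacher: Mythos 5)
Your proof is correct. The heart of the argument --- the computation $\kernel\delta_p=\bR$ via the fact that the orthogonal complement of the hyperplane $\bigl(\bR^\Omega\bigr)_0$ in $\bR^\Omega$ is the line of constant functions --- is exactly the computation in the paper's proof (the paper runs it as a chain of equivalences, which subsumes your separate check that constants lie in the kernel). The only difference is the surjectivity step: the paper proves it directly, by observing that every linear functional on $\tan{p}(\cP)\cong\bigl(\bR^\Omega\bigr)_0$ is of the form $X_p\mapsto\sum_\omega X_p^{(\rmm)}(\omega)A(\omega)$ for some $A\in\bR^\Omega$ and hence equals $\delta_p(A)$ by \eqref{delta(A)(X)_Xm}, whereas you deduce it from the kernel computation by rank--nullity. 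Both routes are valid and equally elementary in this finite-dimensional setting; your dimension count makes the kernel computation carry the entire proof, while the paper's version exhibits an explicit preimage for each cotangent vector, which is marginally more informative. No gap either way.
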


\begin{proof}
Every cotangent vector $\alpha_p\in \cotan{p} (\cP)$ is a linear functional on $\tan{p} (\cP)$, which 
is represented as $\alpha_p : X_p \mapsto \sum_\omega X_p^{(\rmm)} (\omega ) A (\omega)$ 
by some $A\in \bR^{\Omega}$. This means that $\alpha_p = \delta_p (A)$ due to \eqref{delta(A)(X)_Xm}. 
Hence,  $\delta_p$ is surjective. For any $A\in \bR^{\Omega}$, we have
\ba
A \in \kernel \delta_p \; &\Leftrightarrow \; \any X_p\in \tan{p} (\cP ), \;\; \delta_p (A) (X_p) = 
\sum_\omega X_p^{(\rmm)} (\omega ) A (\omega) = 0
\noret &\Leftrightarrow \; \any B \in \bigl(\bR^\Omega\bigr)_0, \;\; \sum_\omega A(\omega) B(\omega) = 0
\noret &\Leftrightarrow \; A \in \bR, 
\nonumber
\end{align}
which proves $\kernel \delta_p =\bR$. 
\end{proof}

For each $p\in\cP$, denote the $L^2$ inner product and the covariance 
of random variables $A, B\in \bR^\Omega$ by 
\ba
\inprod{A}{B}_p & := \expect{AB}_p \quad \text{and}
\label{def_L2}
\\
\Cov_p (A, B) &:= \inprod{A- \expect{A}_p}{B- \expect{B}_p}_p. 
\end{align}
Then $\Cov_p : (A, B) \mapsto \Cov_p (A, B)$ 
is a degenerate nonnegative bilinear form on $\bR^\Omega$ with kernel $\bR$, and 
defines an inner product on $\bR^\Omega / \bR$.  Therefore, Proposition~\ref{prop_d<A>p_linear_iso} 
 implies that an inner product on $\cotan{p} (\cP)$, which we denote by $g_p$, 
 can be defined by 
\be
g_p (\delta_p (A), \delta_p (B)) =  \Cov_p (A, B).
\label{def_Fisher_co-metric}
\ee
Denoting the norm for $g_p$ by $\norm{\cdot}_p$, we have
\be
\norm{\delta_p(A)}_p^2 = V_p (A),
\label{d<A>2=Vp} 
\ee
where the RHS is the variance $V_p (A) :=  \expect{(A- \expect{A}_p)^2 }_p$. 

We have thus defined the map $g$ which maps each point $p\in\cP$ to the inner product $g_p$ on $\cotan{p} (\cP)$. 
We generally call such a map (a metric on the cotangent bundle) a \emph{co-metric}.
Although a co-metric is essentially equivalent to 
a usual (Riemannian) metric (a metric on the tangent bundle) by the correspondence explained in the next section, 
it is often useful to distinguish them conceptually.  The co-metric defined by \eqref{def_Fisher_co-metric} 
is called the \emph{Fisher co-metric}, since it corresponds to the Fisher metric as will be shown later. 

\begin{remark}
Eq.~\eqref{d<A>2=Vp} is found in Theorem~2.7 of the book \cite{amanag}, 
where the norm and the inner product on the cotangent space were  
considered to be induced from the Fisher metric. 
\end{remark}

 \section{The correspondence between a metric and a co-metric}
\label{sec_metric_from_co-metric}

By a standard argument of linear algebra, an inner product $\inprod{\cdot}{\cdot}$ 
on a $\bR$-linear space $V$ establishes a natural linear isomorphism between $V$ and its dual space $V^*$, 
which we denote by $\stackrel{\inprod{\cdot}{\cdot}}{\longleftrightarrow}$. 
This  gives a one-to-one correspondence between a metric on a manifold $M$ and a co-metric on $M$ as follows. 
Given a metric  $g$ on $M$, a tangent vector $X_p\in \tan{p} (M)$ and 
a cotangent vector 
$\alpha_p\in \cotan{p} (M)$ at a point $p\in M$ correspond each other by
\ba
X_p \stackrel{g_p}{\longleftrightarrow} \alpha_p 
& \; \Leftrightarrow\;
\any Y_p\in \tan{p} (M), \; \alpha_p (Y_p) = g_p (X_p, Y_p).
\label{X_gp_alpha_1}
\end{align}
The correspondence is extended to the correspondence between a vector field $X\in \vf (M)$ and 
a 1-form $\alpha\in \cD (M)$ by 
\be
X \stackrel{g}{\longleftrightarrow} \alpha \; \Leftrightarrow \; \any p\in M, \;\; X_p \stackrel{g_p}{\longleftrightarrow} \alpha_p. 
\ee
(Note: some literature refers to this correspondence as the musical isomorphism with notation $\alpha = X^\flat$
and $X = \alpha^\sharp$, while we will use the symbol $\sharp$ for a different meaning later.)
This correspondence determines a co-metric on $M$,  which is denoted by the same symbol $g$, 
such that for every $p\in M$ 
\be
X_p \stackrel{g_p}{\longleftrightarrow} \alpha_p \;\; 
\text{and} \;\;
Y_p \stackrel{g_p}{\longleftrightarrow} \beta_p 
\; \Rightarrow \;
g_p (\alpha_p, \beta_p) = g_p (X_p, Y_p). 
\label{gp_XY_alphabeta}
\ee
Conversely,  given a co-metric $g$ on $M$, the correspondence $\stackrel{g_p}{\longleftrightarrow}$ is defined by
\ba
X_p \stackrel{g_p}{\longleftrightarrow} \alpha_p 
& \; \Leftrightarrow\;
\any \beta_p\in \cotan{p} (M), \; \beta_p (X_p) = g_p (\alpha_p, \beta_p), 
\label{X_gp_alpha_2}
\end{align}
and a metric on $M$ is defined by the same relation as \eqref{gp_XY_alphabeta}. 
It should be noted that when a metric and a co-metric correspond in this way, 
the relations 
\eqref{X_gp_alpha_1} and \eqref{X_gp_alpha_2} are equivalent, so that 
there arises no confusion even if we use the same symbol $g$ for 
the corresponding metric and co-metric in $\stackrel{g_p}{\longleftrightarrow}$ and $\stackrel{g}{\longleftrightarrow}$. 

Note that for an arbitrary coordinate system $(\xi^i)$ of $M$, $g_{ij} := g(\frac{\partial}{\partial\xi^i}, \frac{\partial}{\partial\xi^j})$
and $g^{ij} := g(d\xi^i, d\xi^j)$ form the inverse matrices of each other at every point of $M$. 
Note also that the norms for $(\tan{p}(M), g_p)$ and $(\cotan{p}(M), g_p)$ 
are linked by
\ba
\norm{X_p}_{p} & = \max_{\alpha_p \in \cotan{p} (M)\setminus \{0\}} \frac{\lvert\alpha_p (X_p)\rvert}{\norm{\alpha_p}_{p}} 
\label{norm_max_tangent_manifold}
\\
\text{and}\qquad 
\norm{\alpha_p}_{p} & = \max_{X_p\in \tan{p} (M)\setminus \{0\}} \frac{\lvert\alpha_p (X_p)\rvert}{\norm{X_p}_{p}}, 
\label{norm_max_cotangent_manifold}
\end{align}
where the $\max$'s in these equations are achieved by those $X_p$ and $\alpha_p$ 
which correspond to each other by $\correspond{g_p}$ 
up to a constant factor. 

For a tangent vector $X_p\in\tan{p} (\cP)$, define 
\be
L_{X_p} := X_p^{(\rmm)} / p\;  \in %\bigl(\bR^\Omega\bigr)_{p, 0} := 
\condset{A\in \bR^\Omega}{\expect{A}_\red{p} =0} , 
\label{def_LX}
\ee
which is the derivative of the map $\cP \rightarrow \bR^\Omega$,  $p\mapsto \log p$ w.r.t.\ $X_p$. 
(In \cite{amanag}, $L_{X_p}$ is called the \emph{e-representation} of $X_p$ and is denoted by $X_p^{(\rme)}$.) 
Note that $L_{X_p}$ is characterized by (cf.\ \eqref{delta(A)(X)_Xm}) 
\be
\any A\in \bR^{\Omega}, \; \delta_p (A) (X_p) = 
X_p \expect{A} = \inprod{L_{X_p}}{A}_p.
\label{delta(A)(X)_LX}
\ee

The following proposition shows that the metric induced from the Fisher co-metric $g$ 
by the correspondence $\stackrel{g}{\longleftrightarrow}$ is the Fisher metric. 

\begin{proposition}
\label{prop_metric_tangent_cotangent}
For each point $p\in \cP$, we have:
\begin{enumerate}
\item $\any A\in \bR^{\Omega}, \;\any X_p\in \tan{p}(\cP), \;\; 
X_p  \stackrel{g_p}{\longleftrightarrow} \delta_p (A) \; \Leftrightarrow\; L_{X_p} = A - \expect{A}_p$. 
\item $\any X_p, Y_p\in \tan{p}(\cP), \;\; g_p (X_p, Y_p) = \inprod{L_{X_p}}{L_{Y_p}}$. 
\end{enumerate}
\end{proposition}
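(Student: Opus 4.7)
My plan is to prove part (1) by directly unwinding the defining equivalence of the co-metric-induced correspondence, and then to derive part (2) as an immediate corollary.

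For part (1), I would apply \eqref{X_gp_alpha_2} with the cotangent vector taken to be $\alpha_p := \delta_p(A)$. Proposition~\ref{prop_d<A>p_linear_iso} guarantees that $\delta_p$ is surjective onto $\cotan{p}(\cP)$, so quantifying over all $\beta_p$ is equivalent to quantifying over $\delta_p(B)$ for $B \in \bR^\Omega$. On one side, the pairing becomes $\delta_p(B)(X_p) = \inprod{L_{X_p}}{B}_p$ by \eqref{delta(A)(X)_LX}, while on the other side the definition \eqref{def_Fisher_co-metric} of the Fisher co-metric gives $g_p(\delta_p(A), \delta_p(B)) = \Cov_p(A, B)$. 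Using the elementary identity $\Cov_p(A,B) = \inprod{A - \expect{A}_p}{B}_p$, the correspondence then reduces to
\[
\any B \in \bR^\Omega, \quad \inprod{L_{X_p} - (A - \expect{A}_p)}{B}_p = 0,
\]
and non-degeneracy of the inner product $\inprod{\cdot}{\cdot}_p$ on $\bR^\Omega$ (which holds since $p$ is strictly positive) yields the desired equivalence $L_{X_p} = A - \expect{A}_p$.

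For part (2), I would apply part (1) with $A := L_{X_p}$. Since $\expect{L_{X_p}}_p = 0$ by \eqref{def_LX}, we have $A - \expect{A}_p = L_{X_p}$, and part (1) therefore gives $X_p \correspond{g_p} \delta_p(L_{X_p})$; analogously $Y_p \correspond{g_p} \delta_p(L_{Y_p})$. Substituting these into \eqref{gp_XY_alphabeta} and again invoking \eqref{def_Fisher_co-metric} yields $g_p(X_p, Y_p) = \Cov_p(L_{X_p}, L_{Y_p})$, which coincides with $\inprod{L_{X_p}}{L_{Y_p}}_p$ because both e-representations are centered with respect to $p$. The whole argument is essentially routine bookkeeping once the right identifications are in place, and I do not expect any genuine obstacle. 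The one subtle point worth flagging is that $\correspond{g_p}$ must be invoked through its co-metric form \eqref{X_gp_alpha_2} rather than through \eqref{X_gp_alpha_1}, because the metric on $\tan{p}(\cP)$ has not yet been identified at this stage---indeed, identifying it is exactly what part (2) accomplishes.
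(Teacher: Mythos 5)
Your proof is correct and follows essentially the same route as the paper's: part (1) by unwinding \eqref{X_gp_alpha_2} with $\beta_p = \delta_p(B)$, computing both sides via \eqref{def_Fisher_co-metric} and \eqref{delta(A)(X)_LX}, and concluding by non-degeneracy of $\inprod{\cdot}{\cdot}_p$; part (2) as a direct consequence of part (1) and \eqref{gp_XY_alphabeta}. The only difference is that you spell out the steps (surjectivity of $\delta_p$, the choice $A := L_{X_p}$) that the paper leaves implicit or labels ``obvious.''
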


\begin{proof} 
\ul{1}: According to \eqref{X_gp_alpha_2}, the condition $X_p  \stackrel{g_p}{\longleftrightarrow} \delta_p (A)$ is equivalent to
\be
\any B\in \bR^\Omega, \; g_p (\delta_p (A), \delta_p (B)) = \delta_p (B) (X_p). 
\label{cond_g(deltaAdeltaB)}
\ee
Here the LHS is equal to 
\[
\inprod{A-\expect{A}_p}{B-\expect{B}_p}_p = \inprod{A-\expect{A}_p}{B}_p, 
\]
while the RHS is equal to $\inprod{L_{X_p}}{B}_p$ by \eqref{delta(A)(X)_LX}. Hence, 
 \eqref{cond_g(deltaAdeltaB)} is equivalent to $L_{X_p} = A-\expect{A}_p$. 
 \\
 \ul{2}: Obvious from item 1 and \eqref{gp_XY_alphabeta}. 
\end{proof}

\section{The Fisher co-metric on a submanifold and the Cram\'{e}r-Rao inequality}
\label{sec_submanifold}

Let $M$ be an arbitrary submanifold of $\cP$. 
Then a metric on $M$ is induced as the restriction of the Fisher metric $g$, which we denote by $g_M : p\mapsto g_{M, p}
= g_p\restrict_{\tan{p}(M)^2}$. 
When a coordinate system $\xi = (\xi^i)$ is given on $M$, corresponding to 
\eqref{partial_m-rep} it holds that
\be
L_{(\partial_i)_p} = \partial_i \log p_\xi \quad\text{at} \quad p = p_\xi.
\label{partial_LX}
\ee
We have
\be
g_{M, ij} (p) := g_{M, p} ((\partial_i)_p, (\partial_j)_p) = \inprod{ \partial_i \log p_\xi}{ \partial_j \log p_\xi}_p, 
\ee
which defines the Fisher information matrix $G_M (p) = [g_{M, ij} (p) ]$. 
The metric $g_{M}$ induces a co-metric on $M$, which is denoted by the same symbol $g_{M}$. 
Letting 
\be
g_{M}^{ij} (p) := g_{M, p} ((d\xi^i)_p, (d\xi^j)_p), 
\label{g^ij_M_p}
\ee
we have  $G_M (p)^{-1} =  [g_{M}^{ij} (p) ]$. 

Suppose that a cotangent vector $\alpha_p \in \cotan{p}(M)$ on $M$ is the restriction of a cotangent vector 
$\tilde{\alpha}_p \in  \cotan{p}(\cP)$ on $\cP$; i.e., $\alpha_p = \tilde{\alpha}_p\restrict_{\tan{p} (M)}$. 
Then, it follows from \eqref{norm_max_cotangent_manifold} that
\ba
\norm{\alpha_p}_{M, p} 
& \red{
=  \max_{X_p\in \tan{p} (M)\setminus \{0\}} \frac{\lvert\alpha_p (X_p)\rvert}{\norm{X_p}_{M, p}}
}
\noret 
&=  \max_{X_p\in \tan{p} (M)\setminus \{0\}} \frac{\lvert\tilde{\alpha}_p (X_p)\rvert}{\norm{X_p}_{p}}
\noret 
& \leq  \max_{X_p\in \tan{p} (\cP)\setminus \{0\}} \frac{\lvert\tilde{\alpha}_p (X_p)\rvert}{\norm{X_p}_{p}} 
= \norm{\tilde{\alpha}_p}_{p} .
\end{align}
(Note that $\norm{X_p}_p = \norm{X_p}_{M, p}$ since the metric on $M$ is the restriction of 
the metric on $\cP$.) 
Furthermore, for an arbitrary $\alpha_p \in \cotan{p}(M)$, there always exists $\tilde{\alpha}_p \in\cotan{p} (\cP)$ 
satisfying $\alpha_p = \tilde{\alpha}_p\restrict_{\tan{p} (M)}$ and 
$\norm{\alpha_p}_{M, p} =\norm{\tilde{\alpha}_p}_{p}$.  Indeed, letting $X_p\in \tan{p}(M)$ be defined by $X_p \correspond{g_{M, p}} \alpha_p$, such an $\tilde{\alpha}_p$ is obtained by $X_p \correspond{g_{p}} \tilde{\alpha}_p$.  

The above observations lead to the following proposition.

\begin{proposition}
\label{prop_norm_submanifold} \
\begin{enumerate}
\item For any $\alpha_p \in \cotan{p} (M)$, we have
\ba
\norm{\alpha_p}_{M, p} & = \min \condset{\norm{\tilalpha_p}_p\,}{\,\tilalpha_p\in \cotan{p}(\cP) \;\;\text{and}\;\; \alpha_p = \tilalpha_p\restrict_{\tan{p}(M)}} 
\noret 
& = \norm{\lift{\alpha_p}}_p, 
\label{eq1_prop_norm_submanifold}
\end{align}
where $\lift{\alpha_p} := \arg\min_{\tilalpha_p}\condset{\norm{\tilalpha_p}_p}{\cdots}$. 
%\red{\, \in \cotan{p} (\cP)}$. 
\item For any  $\alpha_p, \beta_p \in \cotan{p} (M)$, we have
\be
g_{M, p} (\alpha_p, \beta_p) = g_p (\lift{\alpha_p}, \lift{\beta_p}).
\label{eq2_prop_norm_submanifold}
\ee
\end{enumerate}
\end{proposition}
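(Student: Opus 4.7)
The plan is to exploit the one-to-one correspondence $\correspond{g}$ between tangent vectors and cotangent vectors, combined with the fact that $g_M$ is by definition the restriction of $g$ to $\tan{p}(M)$. The text just before the proposition already establishes the inequality $\norm{\alpha_p}_{M,p} \le \norm{\tilalpha_p}_p$ for every extension $\tilalpha_p$ of $\alpha_p$ and exhibits a particular extension saturating this bound, so my strategy is mainly to organize these observations into a clean proof and then reuse the construction for part~2.

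For part~1, I would begin from the dual characterization \eqref{norm_max_cotangent_manifold}. For any $\tilalpha_p \in \cotan{p}(\cP)$ restricting to $\alpha_p$ on $\tan{p}(M)$, the ratio $\lvert\tilalpha_p(X_p)\rvert / \norm{X_p}_p$ over $X_p \in \tan{p}(M)\setminus\{0\}$ coincides with $\lvert\alpha_p(X_p)\rvert / \norm{X_p}_{M,p}$, whose maximum is $\norm{\alpha_p}_{M,p}$; since enlarging the domain of maximization from $\tan{p}(M)\setminus\{0\}$ to $\tan{p}(\cP)\setminus\{0\}$ cannot decrease the supremum, the inequality $\norm{\alpha_p}_{M,p} \le \norm{\tilalpha_p}_p$ follows. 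Next I exhibit the minimizer: let $X_p \in \tan{p}(M)$ be determined by $X_p \correspond{g_{M,p}} \alpha_p$, and define $\lift{\alpha_p} \in \cotan{p}(\cP)$ by $X_p \correspond{g_p} \lift{\alpha_p}$. For any $Y_p \in \tan{p}(M)$ we then have $\lift{\alpha_p}(Y_p) = g_p(X_p, Y_p) = g_{M,p}(X_p, Y_p) = \alpha_p(Y_p)$, so $\lift{\alpha_p}$ really extends $\alpha_p$, and applying \eqref{norm_max_tangent_manifold} in both $\cP$ and $M$ yields $\norm{\lift{\alpha_p}}_p = \norm{X_p}_p = \norm{X_p}_{M,p} = \norm{\alpha_p}_{M,p}$. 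This proves \eqref{eq1_prop_norm_submanifold}; uniqueness of $\lift{\alpha_p}$ (so that the $\arg\min$ notation is unambiguous) follows from strict convexity of $\norm{\cdot}_p^2$ on the affine space of extensions.

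For part~2, I would simply trace through the same construction for both $\alpha_p$ and $\beta_p$. Let $X_p, Y_p \in \tan{p}(M)$ satisfy $X_p \correspond{g_{M,p}} \alpha_p$ and $Y_p \correspond{g_{M,p}} \beta_p$; by the proof of part~1 the lifts satisfy $X_p \correspond{g_p} \lift{\alpha_p}$ and $Y_p \correspond{g_p} \lift{\beta_p}$. Then \eqref{gp_XY_alphabeta} applied in $\cP$ gives $g_p(\lift{\alpha_p}, \lift{\beta_p}) = g_p(X_p, Y_p)$, and since $g_{M,p}$ is the restriction of $g_p$ to $\tan{p}(M)$ this equals $g_{M,p}(X_p, Y_p) = g_{M,p}(\alpha_p, \beta_p)$ by \eqref{gp_XY_alphabeta} applied in $M$, giving \eqref{eq2_prop_norm_submanifold}.

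I do not anticipate any serious obstacle: once the correspondence $\correspond{g}$ and the dual norm identities of the previous section are in hand, the argument is almost mechanical. The only point requiring a brief remark is the uniqueness of the argmin $\lift{\alpha_p}$, which is immediate from strict convexity of the squared norm on the affine space of extensions of $\alpha_p$.
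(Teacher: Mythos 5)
Your argument is correct and is essentially identical to the paper's: the paper presents the two key observations (the inequality $\norm{\alpha_p}_{M,p}\leq\norm{\tilalpha_p}_p$ via \eqref{norm_max_cotangent_manifold}, and the equality-achieving extension obtained from $X_p \correspond{g_{M,p}} \alpha_p$ followed by $X_p \correspond{g_p} \tilalpha_p$) immediately before the proposition and states that these observations yield it. Your added remark on uniqueness of the minimizer via strict convexity is a small but welcome clarification of the $\arg\min$ notation that the paper leaves implicit.
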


The above proposition shows that the Fisher co-metric on $M$ 
can be defined from the Fisher co-metric on $\cP$ directly
by \eqref{eq1_prop_norm_submanifold} and \eqref{eq2_prop_norm_submanifold}, not by way of the Fisher metric.
%\red{Note that $\lift{\alpha_p}$ for a given $\alpha_p \in \cotan{p} (M)$ is determined by the condition that 
%$\lift{\alpha_p}\, \restrict_{\tan{p}(M)}= \alpha_p$ and $\lift{\alpha_p} \in \condset{\tilgamma_p\in \cotan{p}(\cP)\, }{\, \tilgamma_p \restrict_{\tan{p}(M)} =0}^{\perp_p}$, 
%where ${}^{\perp_p}$ denotes the orthogonal complement w.r.t.\ the inner product $g_p$ on $\cotan{p} (\cP)$. 
%}

\begin{corollary}[The Cram\'{e}r-Rao inequality]
\label{cor_Cramer-Rao}
Suppose that an $n$-tuple of random variables $\vec{A}=(A^1 , \dots \red{,}\,  A^n )\in (\bR^\Omega)^n$ satisfies
\be
\any i\in\{1, \ldots , n\}, \;\; (d\xi^i)_p = \delta_p (A^i)\restrict_{\tan{p}(M)}
\label{locally_unbiased_1}
\ee
for a coordinate system $\xi = (\xi^i)$ of an $n$-dimensional {\subm} $M$ of $\cP$ and for a point  
$p\in M$. Letting  $V_p (\vec{A}) = [v^{ij}] \in \bR^{n\times n}$ be the variance-covariance matrix of  $\vec{A}$ defined by
\be
v^{ij}  := \Cov_p (A^i, A^j)
\label{def_vij_classical}
\ee
and letting $G_M (p)$ be the Fisher information matrix, 
we have
\be
V_p (\vec{A}) \geq {G_M (p)}^{-1}.
\label{Cramer-Rao_classical}
\ee
\end{corollary}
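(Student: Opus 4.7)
The plan is to exploit Proposition~\ref{prop_norm_submanifold} directly: the hypothesis \eqref{locally_unbiased_1} says precisely that $\delta_p(A^i)\in\cotan{p}(\cP)$ is an extension of the cotangent vector $(d\xi^i)_p\in\cotan{p}(M)$, so its $\norm{\cdot}_p$ bounds $\norm{(d\xi^i)_p}_{M,p}$ from above. The only issue is to upgrade this from a scalar statement about each $A^i$ to the full matrix inequality \eqref{Cramer-Rao_classical}, which I would do by linearity in $i$.

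Concretely, I would fix an arbitrary $c=(c_1,\dots,c_n)\in\bR^n$ and set
\[
A := \sum_{i=1}^n c_i A^i\in\bR^\Omega,\qquad \alpha_p := \sum_{i=1}^n c_i (d\xi^i)_p\in\cotan{p}(M).
\]
Since $\delta_p$ is linear, \eqref{locally_unbiased_1} gives $\delta_p(A)\restrict_{\tan{p}(M)} = \alpha_p$, so $\delta_p(A)$ is one of the extensions $\tilde{\alpha}_p$ appearing in Proposition~\ref{prop_norm_submanifold}.1. Therefore
\[
\norm{\alpha_p}_{M,p}^2 \;\leq\; \norm{\delta_p(A)}_p^2 \;=\; V_p(A) \;=\; \sum_{i,j} c_i c_j\,\Cov_p(A^i,A^j) \;=\; c^{\mathrm{T}} V_p(\vec{A})\,c,
\]
where the first equality uses \eqref{d<A>2=Vp} and the others are bilinearity of $\Cov_p$.

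Next I would evaluate the left-hand side using the definition \eqref{g^ij_M_p}:
\[
\norm{\alpha_p}_{M,p}^2 \;=\; \sum_{i,j} c_i c_j\, g_{M,p}((d\xi^i)_p,(d\xi^j)_p) \;=\; \sum_{i,j} c_i c_j\, g_M^{ij}(p) \;=\; c^{\mathrm{T}} G_M(p)^{-1} c,
\]
where the last equality is the already-noted fact that $G_M(p)^{-1} = [g_M^{ij}(p)]$. Combining the two displays yields $c^{\mathrm{T}} G_M(p)^{-1} c \leq c^{\mathrm{T}} V_p(\vec{A}) c$ for every $c\in\bR^n$, which is \eqref{Cramer-Rao_classical}.

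There is essentially no obstacle, and this is the whole point of the preceding section: once the co-metric on a submanifold is built as the minimum-norm extension problem in Proposition~\ref{prop_norm_submanifold}, Cram\'{e}r-Rao reduces to the tautology that the minimum of a set is no larger than any of its elements, combined with the identification $\norm{\delta_p(A)}_p^2 = V_p(A)$. If anything is worth noting, it is that no Fisher information matrix, no score function, and no inverse computation enter the argument -- only the restriction of a cotangent vector and the variance.
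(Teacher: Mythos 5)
Your proposal is correct and follows essentially the same route as the paper: form the linear combination $\alpha_p=\sum_i c_i (d\xi^i)_p$ with extension $\sum_i c_i\,\delta_p(A^i)=\delta_p\bigl(\sum_i c_i A^i\bigr)$, apply Proposition~\ref{prop_norm_submanifold} to get the norm inequality, and identify the two quadratic forms as $c^{\mathrm{T}}V_p(\vec{A})c$ and $c^{\mathrm{T}}G_M(p)^{-1}c$. The only cosmetic difference is that you package the extension as $\delta_p(A)$ for $A=\sum_i c_i A^i$ rather than as $\sum_i c_i\,\delta_p(A^i)$, which is the same object by linearity of $\delta_p$.
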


\begin{proof} For an arbitrary column vector $c= (c_i)\in \bR^n$, let 
\ba
\tilalpha_p &:= \sum_i c_i \delta_p (A^i) \in \cotan{p} (\cP), 
\noret
\alpha_p &:= \sum_i c_i (d\xi^i)_p \in \cotan{p} (M). 
\nonumber
\end{align}
Since \eqref{locally_unbiased_1} implies that $\tilalpha_p\restrict_{\tan{p}(M)} = \alpha_p$, 
it follows from Prop.~\ref{prop_norm_submanifold} that $\norm{\tilalpha_p}_p \geq \norm{\alpha_p}_{M, p}$. 
Noting that $\norm{\tilalpha_p}_p^2 = \transpose{c}\, V_p (\vec{A})\,  c$ and $\norm{\alpha_p}_{\red{M, p}}^2 = \transpose{c}\, {G_M (p)}^{-1} c$, where $\transpose{}$ denotes the transpose, 
we obtain \eqref{Cramer-Rao_classical}. 
\end{proof}

\section{On the e, m-connections}
\label{sec_em_connections}

An affine connection is usually treated as a connection on the tangent bundle, 
while it corresponds to a connection on the cotangent bundle by the relation
\be
\any X, Y \in \vf (M), \; \any \alpha\in \cD (M), \;\; X \alpha (Y) = \alpha (\nabla_X Y) +( \nabla_X \alpha ) (Y).
\label{eq_Xalph(Y)}
\ee
This correspondence is one-to-one, so that we can define an affine connection by specifying a connection 
on the cotangent bundle.  Therefore,  the $\alpha$-connection in information geometry can also be introduced in this way. 
Although affine connections are out of the main subject of this paper, we will briefly discuss the significance of defining the m-connection (i.e.\ $(\alpha = -1)$-connection) in this way, since it is closely related to the role of the Fisher co-metric 
in the Cram\'{e}r-Rao inequality. 

We start by introducing the m-connection $\nablam$ on $\cP = \cP (\Omega)$ as 
a flat connection on the cotangent bundle 
for which the 1-form $d\expect{A}$ is parallel for any $A\in \bR^\Omega$; i.e., 
\be
\any X\in \vf (\cP), \any A\in \bR^\Omega, \; \nablam_X d\expect{A} =0.
\label{m-connection_cotangent_P}
\ee
\red{Since  $\dim \condset{d \expect{A}}{A\in \bR^\Omega} = \vert \Omega\vert - 1 = \dim \cP$, 
\eqref{m-connection_cotangent_P} implies that 
every parallel 1-from is represented as $d\expect{A}$ by some $A\in \bR^\Omega$. 
}
Then the correspondence \eqref{eq_Xalph(Y)} determines a connection on 
the tangent bundle, which is denoted by the same symbol $\nablam$. 
Letting $\alpha = d\expect{A}$ in  \eqref{eq_Xalph(Y)}  and applying 
\eqref{m-connection_cotangent_P}, we have
\be
\any X, Y\in \vf (\cP), \any A\in \bR^\Omega, \; 
X Y \expect{A} = (\nablamsub{X} Y) \expect{A} .
\ee
This implies that, for any $Y\in \vf (\cP)$, 
\ba
& \text{$Y$ is m-parallel} 
\noret 
& \; \Leftrightarrow \; \any A\in \bR^\Omega, \, \any X\in  \vf (\cP), \; X Y \expect{A} =0 
%\noret 
%& \; \Leftrightarrow \; \any A\in \bR^\Omega, \; 
%\text{$Y\expect{A}$  is a constant function on $\cP$}
\noret 
& \; \Leftrightarrow \; \any A\in \bR^\Omega, \; 
\text{
$Y_p \expect{A} = \sum_\omega Y_p^{(\rmm)} (\omega) A(\omega)$  does not depend on $p\in\cP$
}
\noret 
& \; \Leftrightarrow \;
\text{$Y_p^{(\rmm)}$ does not depend on $p\in\cP$}, 
\end{align}
where ``m-parallel'' means ``parallel w.r.t.\ $\nablam$''. 
Since this property characterizes the m-connection on $\cP$ (e.g. 
Eq.~(2.39) of \cite{amanag}), our definition of the m-connection 
is equivalent to the usual definition in information geometry. 

Next,  we define the e-connection $\nablae$ as the dual connection of $\nablam$ w.r.t.\ the Fisher metric $g$ 
(\cite{amanag}, \cite{metr}), which means that 
\be
\any X, Y, Z\in \vf (\cP), \; Z g(X, Y) = g(\nablaesub{Z} X, Y) + g(X, \nablamsub{Z} Y).
\label{eq_dual_em_classical}
\ee
Using \eqref{eq_Xalph(Y)}, we can rewrite \eqref{eq_dual_em_classical} into 
\ba
\any X, Y, Z  \in & \vf (\cP),\, \any \alpha \in\cD (\cP), \;
\noret 
& % \Bigl[ 
\,X \correspond{g} \alpha \; \Rightarrow\; 
(\nablamsub{Z} \alpha) (Y) = g (\nablaesub{Z} X, Y).
%\, \Bigr].
\end{align}
This  implies that, for any $X\in \vf (\cP)$ and $\alpha\in\cD (\cP)$, 
\be
X \correspond{g} \alpha \; \Rightarrow \; 
\Bigl[\,
\text{$X$ is e-parallel} \; \Leftrightarrow \; \text{$\alpha$ is m-parallel}
\, \Bigr]. 
\label{X_eparallel_alpha_mparallel}
\ee
% where we say e-parallel and m-parallel to mean parallel w.r.t.\ $\nablae$ and $\nablam$, respectively. 
%Note that \eqref{X_eparallel_alpha_mparallel} is a consequence of the duality \eqref{eq_dual_em_classical} of 
%the e, m-connections, so that it can be generalized for an arbitrary submanifold $M$  and 
%$X\in \vf (M)$ and $\alpha\in \cD (M)$, yielding
%\be
%X \correspond{g_M} \alpha \; \Rightarrow \; 
%\Bigl[\,
%\text{$X$ is e-parallel} \; \Leftrightarrow \; \text{$\alpha$ is m-parallel}
%\, \Bigr].
%\label{X_eparallel_alpha_mparallel_submanifold}
%\ee

Now, let us recall the situation of Corollary~\ref{cor_Cramer-Rao}.  An estimator $\vec{A}=(A^1 , \dots \red{,} \, A^n )$ is said to be \emph{efficient} for the statistical model $(M, \xi)$ when it is unbiased (i.e.\ $\any i$,  $\xi^i = \expect{A^i}\restrict_{M}$) 
and achieves the equality in the Cram\'{e}r-Rao inequality \eqref{Cramer-Rao_classical} for every $p\in M$. 
Noting that the achievability at each $p\in M$ is represented by the condition $\any i$, $\delta_p (A^i) 
\red{= (d\expect{A^i})_p} 
= \lift{(d\xi^i)_p}$ 
and recalling \eqref{m-connection_cotangent_P}, 
we can see that the condition for $(M, \xi)$ to have an efficient estimator is expressed as
\be
\any i, \; \some \tilalpha^i \in \cD (\cP), \;\; \text{$\tilalpha^i$ is m-parallel and}\;\; \any p\in M, \; \tilalpha^i_p = \lift{(d\xi^i)_p}. 
\label{cond_efficient_geometrical}
\ee
On the other hand, it is well known that the existence of an efficient estimator is equivalent to the 
condition that $M$ is an exponential family 
and that $\xi$ is an expectation coordinate system, which can be rephrased as (see Theorem~3.12 of \cite{amanag}) 
\begin{gather}
\text{$M$ is an e-autoparallel submanifold of $\cP$}, 
\noret 
\text{and $\xi$ is an an m-affine coordinate system.}
\label{cond_e-family_m-affine}
\end{gather}

Therefore, the two conditions \eqref{cond_efficient_geometrical} and \eqref{cond_e-family_m-affine} 
are necessarily equivalent.  These  are  both purely geometrical conditions for 
a submanifold of the dually flat space $(\cP, g, \nablae, \nablam)$, and we can prove their equivalence 
within this geometrical framework, forgetting its statistical background.  Indeed, 
the equivalence can be proved for a more general situation where $M$ is  a submanifold of 
a manifold $S$ equipped with a Riemannian metric $g$ and a pair of dual affine connections 
$\nabla, \nablastar$ on the assumption that 
$\nablastar$ is flat. Note that this assumption is weaker than the dually-flatness 
of $(S, g, \nabla, \nablastar)$ 
in that $\nabla$ is allowed to have non-vanishing torsion, which is essential 
in application to quantum estimation theory.  See section~7 of \cite{nagfuji_autoparallel} for details.

\section{Monotonicity}
\label{sec_monotonicity}

The monotonicity with respect to a Markov map is known to be an important and characteristic property of the Fisher metric. 
In this section we discuss the monotonicity of the Fisher co-metric and its relation to the variance of random variables. 

Let $\Omega_1$ and $\Omega_2$ be arbitrary finite sets, and let $\cP_i:= \cP (\Omega_i)$ for $i=1, 2$.
A map $\Phi : \cP_1 \rightarrow \cP_2$ is called a \emph{Markov map} 
when it is affine in the sense that $\any p, q\in \cP_1 $, $0\leq \any a \leq 1$, $\Phi (a p + (1-a) q ) = a\, \Phi (p) + (1-a)\, \Phi (q)$.  Every Markov map $\Phi$ is represented as
\be
\any p\in \cP_1, \;\; 
\Phi (p) = \sum_{x\in \Omega_1} W(\cdot \,\vert\, x) p(x), 
\label{Phi_W}
\ee
where $W$ is a surjective channel from $\Omega_1$ to $\Omega_2$; i.e.,
\begin{gather}
\any (x, y)\in \Omega_1\times \Omega_2, \; W(y\,\vert \, x) \geq 0, \quad 
\any x\in \Omega_1, \; \sum_{y\in \Omega_2} W(y\,\vert \, x) =1, \\
\text{and}\quad 
\any y\in \Omega_2, \; \some x\in \Omega_1, \; W(y\,\vert \, x) >0.
\end{gather}
When $\Phi$ is represented as \eqref{Phi_W}, we write $\Phi = \Phi_W$. 

More generally, for a {\subm} $M$ of $\cP_1$ and a {\subm} $N$ of $\cP_2$, a map $\varphi : M \rightarrow N$ 
is called a \emph{Markov map} when there exists a Markov map $\Phi : \cP_1 \rightarrow \cP_2$ such that $\varphi = \Phi\restrict_M$. 
Since a Markov map $\varphi$ is smooth, it induces at each $p\in M$ 
the differential
\be
\varphi_* = \varphi_{*,p} = (d\varphi)_p : \tan{p} (M) \rightarrow \tan{\varphi (p)} (N)
\ee
and its dual
\be
\varphi^* = \varphi^*_p = \transposedmap{ (d\varphi)_p } : \cotan{\varphi(p)} (N) \rightarrow \cotan{p} (M),
\ee
where $\transposedmap{}$ denotes the transpose of a linear map. 
See Remark~\ref{rem_phi^*} below for the notation $\varphi^* = \varphi^*_p$. 

As is well known, the Fisher metric satisfies the following monotonicity property for its norm: 
\be
\any p\in M, \any X_p\in  \tan{p} (M), \;\; 
\norm{\varphi_*(X_p)}_{\varphi (p)} \leq \norm{X_p}_{p}.
\label{monotonicity_tangent_classical}
\ee
%where $\norm{\cdot}_{M,p}$ and $\norm{\cdot}_{N,\varphi (p)}$ denote the norms for the Fisher metrics $g_{M, p}$ and $g_{N, \varphi (p)}$  on the tangent spaces $\tan{p} (M)$ and $\tan{\varphi(p)} (N)$, respectively.  
The cotangent version of the monotonicity is given below.

\begin{proposition} We have
\label{prop_monotonicity_cotangent_classical}
\be
\any p\in M, \any \alpha_{\varphi(p)} \in \cotan{\varphi (p)} (N), \;\; 
\norm{\varphi^*(\alpha_{\varphi(p)})}_{M, p} \leq \norm{\alpha_{\varphi(p)}}_{N, \varphi(p)}, 
\label{monotonicity_cotangent_classical}
\ee
where  $\norm{\cdot}_{M, p}$ and $\norm{\cdot}_{N, \varphi (p)}$ denote the norms w.r.t.\  
the Fisher co-metrics $g_{M}$ and $g_{N}$, respectively.  
\end{proposition}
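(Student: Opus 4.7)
The plan is to deduce the co-metric monotonicity \eqref{monotonicity_cotangent_classical} directly from the already-stated tangent-side monotonicity \eqref{monotonicity_tangent_classical}, using the sup-characterization \eqref{norm_max_cotangent_manifold} to convert the norm on $\cotan{p}(M)$ into a supremum over test vectors in $\tan{p}(M)$. In this way all the geometric content is carried by the tangent version; only the duality between the two norms is layered on top.

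Concretely, I would fix $p\in M$ and $\alpha_{\varphi(p)}\in\cotan{\varphi(p)}(N)$, and for an arbitrary nonzero $X_p\in\tan{p}(M)$ rewrite
\[
\varphi^{*}(\alpha_{\varphi(p)})(X_p) \;=\; \alpha_{\varphi(p)}\bigl(\varphi_{*}(X_p)\bigr)
\]
by the defining relation $\varphi^{*} = \transposedmap{(d\varphi)_p}$. Next, applying \eqref{norm_max_cotangent_manifold} on $N$ to the cotangent vector $\alpha_{\varphi(p)}$ and the tangent vector $\varphi_{*}(X_p)\in\tan{\varphi(p)}(N)$ yields
\[
\bigl\lvert\alpha_{\varphi(p)}(\varphi_{*}(X_p))\bigr\rvert \;\leq\; \norm{\alpha_{\varphi(p)}}_{N,\varphi(p)}\,\norm{\varphi_{*}(X_p)}_{N,\varphi(p)},
\]
and the tangent-side monotonicity \eqref{monotonicity_tangent_classical} bounds the last factor by $\norm{X_p}_{M,p}$ (using that $\norm{X_p}_p = \norm{X_p}_{M,p}$, since $g_M$ is the restriction of $g$). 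Dividing by $\norm{X_p}_{M,p}$ and then taking the supremum over nonzero $X_p$ via \eqref{norm_max_cotangent_manifold} on $M$ produces exactly \eqref{monotonicity_cotangent_classical}.

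I do not foresee a serious obstacle: the argument is essentially a two-line duality chase once \eqref{monotonicity_tangent_classical} is granted. The only bookkeeping point is the possibility that $\varphi_{*}(X_p) = 0$ for some nonzero $X_p$, but then $\alpha_{\varphi(p)}(\varphi_{*}(X_p))=0$ as well, so the corresponding ratio contributes nothing to the supremum. An alternative, more intrinsic route would lift $\alpha_{\varphi(p)}$ to $\cotan{\varphi(p)}(\cP_2)$ via Proposition~\ref{prop_norm_submanifold}, represent the lift as $\delta_{\varphi(p)}(A)$ for some $A\in\bR^{\Omega_2}$, and use \eqref{d<A>2=Vp} to recast the desired inequality as a variance inequality between random variables on $\Omega_2$ and their pullbacks to $\Omega_1$ through the channel $W$; this variance-based derivation fits more cleanly with the section's theme but is longer than the duality shortcut above.
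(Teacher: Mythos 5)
Your proof is correct and follows essentially the same route as the paper: both deduce \eqref{monotonicity_cotangent_classical} from the tangent-side monotonicity \eqref{monotonicity_tangent_classical} by way of the duality characterization \eqref{norm_max_cotangent_manifold}, and your handling of the degenerate case $\varphi_*(X_p)=0$ (via the product-form bound, under which both sides vanish) is if anything slightly cleaner than the paper's explicit restriction to $X_p$ with $\varphi_*(X_p)\neq 0$.
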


\begin{proof} \, 
\red{
Since the inequality is trivial when  $\norm{\varphi^*(\alpha_{\varphi(p)})}_{M, p} = 0$, we 
assume $\norm{\varphi^*(\alpha_{\varphi(p)})}_{M, p} > 0$. 
}
Then, invoking \eqref{norm_max_cotangent_manifold},  we have
\ba
\norm{\varphi^*(\alpha_{\varphi(p)})}_{M, p} 
& = 
\max_{X_p\in \tan{p} (M) \setminus \{0\}} \frac{\vert \varphi^* (\alpha_{\varphi (p)}) (X_p)\vert}{\norm{X_p}_{\red{M, p}}}
\noret 
&= \max_{X_p\in \tan{p} (M) \setminus \{0\}} \frac{\vert \alpha_{\varphi (p)} (\varphi_*(X_p))\vert}{\norm{X_p}_p}
\noret 
&\red{= \max_{X_p\in \tan{p} (M) : \varphi_*(X_p) \neq 0} \frac{\vert \alpha_{\varphi (p)} (\varphi_*(X_p))\vert}{\norm{X_p}_p}}
\noret 
&\leq \red{\sup_{X_p\in \tan{p} (M) : \varphi_*(X_p) \neq 0}}
 \frac{\vert \alpha_{\varphi (p)} (\varphi_*(X_p))\vert}{\norm{\varphi_*(X_p)}_{\varphi (p)}}
\noret 
&\leq \max_{Y_{\varphi (p)}\in \tan{\varphi(p)} (N) \setminus \{0\}} \frac{\vert \alpha_{\varphi (p)} (Y_{\varphi(p)})\vert}{\norm{Y_{\varphi(p)}}_{\varphi(p)}}
=  \norm{\alpha_{\varphi (p)}}_{N, \varphi (p)}, 
\nonumber
\end{align}
where \red{the third equality follows since $X_p$ achieving $\max_{X_p\in \tan{p} (M) \setminus \{0\}}$ 
should satisfy 
$\varphi_*(X_p) \neq 0$ due to the assumption  $\norm{\varphi^*(\alpha_{\varphi(p)})}_{M, p} > 0$, and }
the first $\leq$ follows from \eqref{monotonicity_tangent_classical}. 
\end{proof}

\begin{remark}
\label{rem_phi^*}
We have written $\varphi^* (\alpha_{\varphi(p)})$ for $\varphi^*_p (\alpha_{\varphi(p)})$ above 
(and will use similar notations throughout the paper), considering that 
omitting $p$ from $\varphi^*_p$ is harmless in the context and is better for the readability of expressions. 
Note that the notation $\varphi^* (\alpha_{\varphi(p)})$, if it appears alone,  is mathematically ambiguous 
\red{unless $\varphi^{-1} (\varphi (p))$ is the singleton $\{p\}$ } in that 
$\varphi^*_{p'} (\alpha_{\varphi(p)}) \in \cotan{p'}(M)$ \red{depends on a choice of $p' \in\varphi^{-1} (\varphi (p))$}. 
% is defined for any \red{$p' \in\varphi^{-1} (\varphi (p))$}.  
On the other hand, \red{the notation} $\varphi_*(X_p)$ has no such ambiguity, 
since \red{we know that the argument $X_p$ belongs to $\tan{p}(M)$ and hence  
$\varphi_*$ must be $\varphi_{*, p}$}.
\end{remark}

Let us consider the case when $M = \cP_1$ and $N = \cP_2$, and let 
$\Phi = \Phi_W: \cP_1 \rightarrow \cP_2$ be an arbitrary Markov map 
represented by a surjective channel $W$. 
Recalling \eqref{Phi_W} and the definition of m-representation of tangent vectors, we have 
\be
Y_{\Phi (p)} = \Phi_* (X_p )  \; \Leftrightarrow \; Y_{\Phi (p)}^{(\rmm)} =
 \sum_{x\in \Omega_1} W(\cdot \,\vert\, x) X_p^{(\rmm)} (x) 
 \label{m-rep_YPhi(p)}
\ee
for $X_p \in \tan{p}(\cP_1)$ and $Y_{\Phi (p)}\in \tan{\Phi (p)} (\cP_2)$. 
We claim that 
\be
\any A\in \bR^{\Omega_2}, \;\; 
\Phi^* (\delta_{\Phi (p)} (A)) = \delta_p (E_W (A\,\vert\,\cdot)), 
 \label{Phi*_delta_Phi(p)_A}
\ee
where $\Phi^* = \Phi^*_p$, and 
$E_W (A\,\vert\,\cdot) \in \bR^{\Omega_1}$ denotes the conditional expectation of $A$ defined by
\be
\any x\in \Omega_1, \;\;
E_W (A\,\vert\,x) = \sum_{y\in \Omega_2} W(y \,\vert\, x) A(y).
\ee
%%%%
Eq.\ \eqref{Phi*_delta_Phi(p)_A} is verified as follows; 
for every $\beta_{p} = \delta_p (B) \in \cotan{p} (\cP_1)$, where 
$B \in \bR^{\Omega_2}$, we have
\ba
& \beta_{p} = \Phi^* (\delta_{\Phi (p)} (A)) 
\noret & \; \Leftrightarrow \; 
\any X_p\in \tan{p} (\cP_1), \;\; \beta_{p} (X_p) =\delta_{\Phi (p)} (A)(\Phi_* (X_p)) 
\noret & \; \Leftrightarrow \; 
\any X_p\in \tan{p} (\cP_1), \;\; 
\sum_{x\in \Omega_1} X^{(\rmm)}_p (x) B(x) = 
\sum_{(x, y) \in \Omega_1\times \Omega_2} W(y \,\vert\, x) X_p^{(\rmm)} (x) A(y) 
\noret & \; \Leftrightarrow \; 
B - E_W (A\,\vert\,\cdot) \in \bR
\noret & \; \Leftrightarrow \; 
\beta_p = \delta_p ( E_W (A\,\vert\,\cdot)), 
 \label{A_B_Phi^*}
\end{align}
where the second $\Leftrightarrow$ follows from 
\eqref{delta(A)(X)_Xm} and \eqref{m-rep_YPhi(p)}, the third $\Leftrightarrow$ 
follows from  $\tanm{p}(\cP) = \bigl(\bR^\Omega\bigr)_0$, 
and
$\bR$ is identified with the set of constant functions on $\Omega_1$. 

Invoking \eqref{d<A>2=Vp} and \eqref{Phi*_delta_Phi(p)_A}, we see that the monotonicity \eqref{monotonicity_cotangent_classical} is 
equivalent to  the following well-known inequality for the variance: 
\be
\any A\in \bR^{\Omega_2}, \;\; V_{p} (E_W (A\,\vert\,\cdot) ) \leq 
V_{\Phi(p)} (A), 
\label{monotonicity_variance}
\ee
which we refer to as the \emph{monotonicity of the variance}. 

In the above proof of Prop.~\ref{prop_monotonicity_cotangent_classical}, 
we derived \eqref{monotonicity_cotangent_classical} from \eqref{monotonicity_tangent_classical}. 
Conversely, we can derive \eqref{monotonicity_tangent_classical} from \eqref{monotonicity_cotangent_classical} 
by the use of \eqref{norm_max_tangent_manifold} as follows; for any $X_p\in \tan{p}(M)$, 
\ba
\norm{\varphi_*(X_p)}_{\varphi (p)} 
&= \max_{\alpha_{\varphi(p)}\in\cotan{\varphi(p)}(N)\setminus \{0\}}
\frac{\vert \alpha_{\varphi(p)} (\varphi_* (X_p))\vert}{\norm{\alpha_{\varphi(p)}}_{N, \varphi(p)}}
\noret
& = \max_{\alpha_{\varphi(p)}\in\cotan{\varphi(p)}(N)\setminus \{0\}}
\frac{\vert \varphi^*(\alpha_{\varphi(p)}) (X_p)\vert}{\norm{\alpha_{\varphi(p)}}_{N, \varphi(p)}}
\noret
& \red{= \max_{\alpha_{\varphi(p)}\in\cotan{\varphi(p)}(N): 
\varphi^*(\alpha_{\varphi(p)}) \neq 0}
\frac{\vert \varphi^*(\alpha_{\varphi(p)}) (X_p)\vert}{\norm{\alpha_{\varphi(p)}}_{N, \varphi(p)}}
}
\noret
& \leq 
\red{
\sup_{\alpha_{\varphi(p)}\in\cotan{\varphi(p)}(N): 
\varphi^*(\alpha_{\varphi(p)}) \neq 0}
}
\frac{\vert \varphi^*(\alpha_{\varphi(p)}) (X_p)\vert}
{\norm{
\varphi^*(\alpha_{\varphi(p)})
}_{M, p}}
\noret
& \leq \max_{\beta_{p}\in\cotan{p}(M)\setminus \{0\}}
\frac{\vert \beta_{p} (X_p)\vert}
{\norm{
\beta_{p}
}_{M, p}} 
= \norm{X_p}_p,
\end{align}
where 
\red{we have assumed $\norm{\varphi_*(X_p)}_{\varphi (p)} >0$ with no loss of generality, 
which yields the third equality (cf.\  the proof of Prop.~\ref{prop_monotonicity_cotangent_classical}), and 
}
the first $\leq$ follows from  \eqref{monotonicity_cotangent_classical}.  Thus, 
 \eqref{monotonicity_tangent_classical} and \eqref{monotonicity_cotangent_classical} are equivalent. 
Note that this equivalence is derived solely from a general argument on metrics and co-metrics, 
and does not rely on the special characteristics of the Fisher metric/co-metric. In this sense, we say that 
 \eqref{monotonicity_tangent_classical} and \eqref{monotonicity_cotangent_classical} 
 are \emph{logically equivalent}. 
 
 Recalling that the Fisher metric is characterized as the unique monotone metric up to a constant factor, we obtain the following propositions from the logical equivalence mentioned above.

 \begin{proposition}
\label{proposition_characterization_co-metric_monotonicity}
 The monotonicity \eqref{monotonicity_cotangent_classical} 
 characterizes the Fisher co-metric up to a constant factor. 
 \end{proposition}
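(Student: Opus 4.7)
The plan is to combine two ingredients already assembled in the paper. The first is the classical \v{C}encov characterization of the Fisher metric as the unique monotone Riemannian metric on the family $\{\cP(\Omega)\}$ up to a positive constant, which is the fact alluded to in the sentence immediately preceding the proposition. The second is the logical equivalence between the tangent-bundle monotonicity \eqref{monotonicity_tangent_classical} and the cotangent-bundle monotonicity \eqref{monotonicity_cotangent_classical}, which was derived explicitly in the paragraph just above.

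Concretely, I would let $\tilde{g}$ be any co-metric on each $\cP(\Omega)$ (and on submanifolds thereof via restriction, as in Section~\ref{sec_submanifold}) that satisfies \eqref{monotonicity_cotangent_classical} under every Markov map $\varphi$. Via the one-to-one correspondence between metrics and co-metrics recalled in Section~\ref{sec_metric_from_co-metric}, $\tilde{g}$ also determines a Riemannian metric on each $\cP(\Omega)$, which I denote by the same symbol. The key observation is that the derivation shown in the paragraph preceding the proposition is a purely general duality argument built on \eqref{norm_max_tangent_manifold} and \eqref{norm_max_cotangent_manifold}, and nowhere uses any property specific to the Fisher metric or Fisher co-metric. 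Applying that argument to $\tilde{g}$, the associated metric $\tilde{g}$ automatically satisfies the tangent-bundle monotonicity \eqref{monotonicity_tangent_classical}.

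By \v{C}encov's theorem, the metric $\tilde{g}$ must then coincide with the Fisher metric $g$ up to a positive constant, say $\tilde{g} = c\, g$ on the tangent bundle. Passing back through the correspondence translates this into the cotangent-bundle identity $\tilde{g} = c^{-1} g$, since in any coordinate system $g_{ij}$ and $g^{ij}$ are matrix inverses of one another, and rescaling $g_{ij}$ by $c$ rescales $g^{ij}$ by $c^{-1}$. Thus $\tilde{g}$ is a constant multiple of the Fisher co-metric, as required. The only step worth flagging is the second one: one must be alert that the equivalence argument in the text is to be re-read with the \emph{unknown} $\tilde{g}$ in place of the Fisher metric/co-metric. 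Since the derivation relied solely on the general relations \eqref{norm_max_tangent_manifold}–\eqref{norm_max_cotangent_manifold} valid for any metric/co-metric pair, this is immediate and is the main conceptual content of the proof rather than a technical obstruction.
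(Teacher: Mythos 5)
Your proposal is correct and follows essentially the same route as the paper: the paper derives this proposition in one step from the logical equivalence of \eqref{monotonicity_tangent_classical} and \eqref{monotonicity_cotangent_classical} (established via \eqref{norm_max_tangent_manifold}--\eqref{norm_max_cotangent_manifold}, with no Fisher-specific input) together with the known characterization of the Fisher metric as the unique monotone metric. Your explicit flag that the duality argument must be re-read for the unknown $\tilde{g}$, which requires $\tilde{g}_p$ to be a genuine inner product at each point, is exactly the caveat the paper records in Remark~\ref{remark_characterization_co-metric_monotonicity_strong_invariance}.
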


\begin{proposition}
The variance is characterized up to a constant factor as the positive quadratic form for random variables satisfying the monotonicity \eqref{monotonicity_variance}. 
\end{proposition}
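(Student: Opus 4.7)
The plan is to reduce the claim to Proposition~\ref{proposition_characterization_co-metric_monotonicity} by transporting any candidate family of positive quadratic forms through the identification $\delta_p : \bR^\Omega/\bR \to \cotan{p}(\cP(\Omega))$ established in Proposition~\ref{prop_d<A>p_linear_iso}.

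Given a family $Q = \{Q_p\}$ (for each finite $\Omega$ and each $p \in \cP(\Omega)$) of positive quadratic forms on $\bR^\Omega$ satisfying the variance monotonicity~\eqref{monotonicity_variance} and vanishing on constants, the formula
\[
\tilde g_p(\delta_p(A), \delta_p(B)) := Q_p(A, B)
\]
well-defines a co-metric $\tilde g$ on $\cP(\Omega)$. Using~\eqref{Phi*_delta_Phi(p)_A} to rewrite $\Phi^*(\delta_{\Phi(p)}(A))$ as $\delta_p(E_W(A\vert\cdot))$, the variance monotonicity of $Q$ translates directly into the cotangent monotonicity~\eqref{monotonicity_cotangent_classical} for $\tilde g$, since
\[
\norm{\Phi^*(\delta_{\Phi(p)}(A))}_{\tilde g, p}^2 = Q_p(E_W(A\vert\cdot)) \leq Q_{\Phi(p)}(A) = \norm{\delta_{\Phi(p)}(A)}_{\tilde g, \Phi(p)}^2,
\]
and every cotangent vector at $\Phi(p)$ has the form $\delta_{\Phi(p)}(A)$ by Proposition~\ref{prop_d<A>p_linear_iso}. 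Proposition~\ref{proposition_characterization_co-metric_monotonicity} then forces $\tilde g = c \cdot g$ for some constant $c > 0$, and comparing with the definition~\eqref{def_Fisher_co-metric} of the Fisher co-metric yields $Q_p(A, B) = c \cdot \Cov_p(A, B)$, hence $Q_p = c \cdot V_p$ as quadratic forms.

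The main obstacle is establishing that $Q_p$ must vanish on the constants $\bR \subset \bR^\Omega$. Without this, the form $Q_p(A) := c \cdot V_p(A) + \lambda \expect{A}_p^2$ would satisfy the monotonicity for any $\lambda \geq 0$ (since $\expect{E_W(A\vert\cdot)}_p = \expect{A}_{\Phi(p)}$), violating uniqueness. I read the proposition as implicitly restricting attention to quadratic forms on the natural quotient $\bR^\Omega/\bR$---the setting in which the variance itself lives, as $V_p(A+c)=V_p(A)$---so that vanishing on constants is built into the admissible class. Under this natural convention, the translation through $\delta_p$ sketched above closes the argument; otherwise, one must additionally impose the translation invariance $Q_p(A + c) = Q_p(A)$ to cut out these parasitic solutions before invoking Proposition~\ref{proposition_characterization_co-metric_monotonicity}.
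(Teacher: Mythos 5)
Your argument is correct and takes essentially the same route as the paper: the paper likewise obtains this proposition by transporting the variance monotonicity through the identification $\delta_p$ into the monotonicity of the Fisher co-metric (and thence of the metric) and invoking the known characterization of the Fisher metric as the unique monotone metric up to a constant factor. Your caveat about forms that do not vanish on constants is well taken and matches the paper's own acknowledgment that the proposition is stated only in a rough form, the exact formulation requiring a normalization such as condition (ii-1) of Theorem~\ref{theorem_invariance_covariance}.
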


\begin{remark}
We have described the above propositions in a rough form for the sake of readability. 
For the exact statement, we need a formulation similar to Theorems~\ref{theorem_invariance_metric}, 
\ref{theorem_invariance_co-metric} and \ref{theorem_invariance_covariance} in the next section. 
See also Remark~\ref{remark_characterization_co-metric_monotonicity_strong_invariance}. 
\end{remark}

\begin{remark}
Since the monotonicity of the Fisher metric  \eqref{monotonicity_tangent_classical}, that of the Fisher co-metric 
 \eqref{monotonicity_cotangent_classical}, 
and that of the variance \eqref{monotonicity_variance} are all logically equivalent, 
we can derive  \eqref{monotonicity_tangent_classical} from the more popular \eqref{monotonicity_variance}. 
\end{remark}

\section{Invariance}
\label{sec_invariance}

\v{C}encov showed in \cite{cencov} that the Fisher metric is characterized up to a constant factor 
as a covariant tensor field of degree 2 satisfying the invariance for Markov embeddings. 
Note that the invariance is weaker than the 
monotonicity and that the tensor field is not assumed to be positive nor symmetric. 
 In this section we review \v{C}encov's theorem and then investigate its co-metric version, which will be 
 shown to be equivalent to a theorem characterizing the variance/covariance of random variables. 

We begin by reviewing the invariance property of the Fisher metric. Suppose that 
$M$ and $N$ are arbitrary {\subm}s of $\cP_1 = \cP (\Omega_1)$ and $\cP_2 = \cP (\Omega_2)$, respectively, 
and that a pair of Markov maps 
\be
\varphi : M \rightarrow N \quad\text{and}\quad \psi : N \rightarrow M
\label{varphi_psi_given}
\ee
satisfis
\be
\psi \circ \varphi = \id_M, 
\label{psi_varphi_id}
\ee
where $\circ$ denotes the composition of maps. 
Note that $\varphi$ is injective while $\psi$ is surjective. 
Given a pair of points $(p, q) \in M\times N$ satisfying
\be
q = \varphi (p) \quad\text{and}\quad p = \psi (q),
\label{q_p_varphi_psi}
\ee
we have
\be
\psi_{*, q} \circ \varphi_{*, p} = \id_{\tan{p}(M)}. 
\label{psi*_varphi*_id_T}
\ee
It then follows from the monotonicity \eqref{monotonicity_tangent_classical} that
\be
\norm{X_p}_{M, p} \geq \norm{\varphi_* (X_p)}_{N, q} \geq \norm{\psi_* (\varphi_* (X_p))}_{M, p} = \norm{X_p}_{M, p},
\ee
so that we have the invariance of the Fisher metric  
\be
\any X_p\in \tan{p} (M), \;\; 
\norm{X_p}_{M, p} =  \norm{\varphi_* (X_p)}_{N, q}, 
\label{invariance_Fisher_tangent_1}
\ee
which is equivalent to
\be
\any X_p, \any Y_p\in \tan{p} (M), \;\; 
g_{M, p} (X_p, Y_p) = g_{N, q} (\varphi_* (X_p), \varphi_* (Y_p)) \red{.}
\label{invariance_Fisher_tangent_1var}
\ee
This means that $\varphi_{*, p} : \tan{p}(M) \rightarrow \tan{q} (N)$ is isometry, which is represented as
\be
(\varphi_{*, p})^\dagger \circ \varphi_{*, p} = \id_{\tan{p}(M)}, 
\label{invariance_Fisher_tangent_2}
\ee
where $(\varphi_{*, p})^\dagger : \tan{q}(N) \rightarrow \tan{p} (M)$ denotes the adjoint (Hermitian conjugate) of 
$\varphi_{*, p} $ w.r.t.\ the inner products $g_{M, p}$ and $g_{N, q}$. 

A \red{Markov} map $\Phi : \cP_1\rightarrow \cP_2$ is called an \emph{Markov embedding} when 
there exists a Markov map $\Psi :  \cP_2\rightarrow \cP_1$ such that 
\be
\Psi \circ \Phi = \id_{\cP_1}. 
\label{Psi_Phi_id}
\ee
Note that $\vert\Omega_1\vert \leq \vert\Omega_2\vert$ necessarily holds in this case. 
As a special case of the invariance \eqref{invariance_Fisher_tangent_1var}, we have
\be
\any p\in \cP_1,\, \any X_p, \any Y_p\in \tan{p} (\cP_1), \;\; 
g_p (X_p, Y_p) = g_{\Phi (p)} (\Phi_* (X_p), \Phi_* (Y_p)).
\ee
According to \v{C}encov, this property characterizes the Fisher metric up to a constant factor. 
The exact statement is presented below.

\begin{theorem}[\v{C}encov \cite{cencov}]
\label{theorem_invariance_metric}
For $n=2, 3, \ldots$, let $\Omega_n:= \{1, 2, \ldots , n\}$ and $\cP_n:= \cP (\Omega_n)$, and let 
$g_n$ be the Fisher metric on $\cP_n$.  Suppose that we are given a sequence $\{h_n\}_{n=2}^\infty$, where 
$h_n$ is a covariant tensor field of degree 2 on $\cP_n$ which  continuously maps each point $p\in \cP_n$ to a bilinear form $h_{n, p} : \tan{p} (\cP_n)^2 \rightarrow \bR$.  Then the following two conditions are equivalent.
\begin{enumerate}
\item[(i)] $\some c\in \bR, \, \any n, \;\; h_n = c g_n$.
\item[(ii)] For any $m\leq n$ and any Markov embedding $\Phi : \cP_m \rightarrow \cP_n$, it holds that
\ba
\any p\in \cP_m,\, & \any X_{p}, \any Y_p \in \tan{p} (\cP_m), 
\noret & 
h_{m, p} (X_{p}, Y_{p}) = h_{n, \Phi (p)} (\Phi_* (X_p),  \Phi_* (Y_p)).
\label{(ii)_theorem_invariance_metric}
\end{align}
\end{enumerate} 
\end{theorem}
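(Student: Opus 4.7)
The direction (i) $\Rightarrow$ (ii) is immediate: the discussion preceding the theorem shows that every Markov map $\varphi$ admitting a left inverse Markov map $\psi$ (in particular every Markov embedding) satisfies \eqref{invariance_Fisher_tangent_1var}. The plan is therefore to prove (ii) $\Rightarrow$ (i) by a symmetry argument at the uniform distribution, followed by a transport argument along atom-splitting embeddings, followed by a density/continuity argument.

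\textbf{Symmetry at the uniform point.} Write $u_n \in \cP_n$ for the uniform distribution. Every permutation $\sigma$ of $\Omega_n$ induces a Markov isomorphism $\Phi_\sigma : \cP_n \rightarrow \cP_n$, which is a Markov embedding fixing $u_n$; hypothesis (ii) therefore forces the bilinear form $h_{n, u_n}$ on $V := \tan{u_n}(\cP_n) \cong \bigl(\bR^{\Omega_n}\bigr)_0$ to be invariant under the induced $S_n$-action (by the m-representation). That action is the standard reflection representation of $S_n$, which is absolutely irreducible over $\bR$. Invariant bilinear forms on $V$ correspond to $S_n$-equivariant maps $V \rightarrow V^*$, and since $V \cong V^*$ as $S_n$-representations via $g_{n, u_n}$, Schur's lemma gives $\dim_{\bR} \mathrm{Hom}_{S_n}(V, V^*) = 1$; hence $h_{n, u_n} = c_n \, g_{n, u_n}$ for some $c_n \in \bR$.

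\textbf{Transport by splitting embeddings and continuity.} Given $p \in \cP_n$ with rational coordinates $p(i) = k_i / N$ (where $k_i \in \bN$ and $\sum_i k_i = N$), let $\Phi : \cP_n \rightarrow \cP_N$ be the Markov map that splits atom $i$ into $k_i$ equi-probable atoms; the pooling channel is a Markov left inverse, so $\Phi$ is a Markov embedding, and $\Phi(p) = u_N$. Combining the invariance of $h$ and of $g$ under $\Phi$ with the previous paragraph applied at $u_N$ yields $h_{n, p} = c_N \, g_{n, p}$. Repeating the construction with a further uniform refinement (each atom split into $\ell$ equal pieces) shows $c_N = c_{\ell N}$ for every $\ell \geq 1$, and varying $n, N, \ell$ forces all of the $c_N$'s to coincide with a single constant $c \in \bR$. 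Since rationally-supported points are dense in $\cP_n$ and both $h_n$ and $g_n$ depend continuously on $p$, $h_n = c \, g_n$ throughout $\cP_n$.

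\textbf{Main obstacle.} The delicate point is the Schur step, because $h_{n, u_n}$ is assumed to be neither symmetric nor positive; the argument above circumvents this by analyzing the space of invariant bilinear forms directly, rather than trying to split $h$ into symmetric and antisymmetric parts up front. A secondary piece of bookkeeping is to verify for each splitting channel $W$ that the corresponding pooling map is itself a Markov map, so that hypothesis (ii) genuinely applies; and in the transport step one must use the fact (established earlier in the paper) that the invariance identity for the Fisher metric already holds for these $\Phi$'s, so that the consistency equations collapse cleanly to $h_{n, p} = c \, g_{n, p}$.
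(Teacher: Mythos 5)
Your proposal is correct and follows essentially the same route as the paper: the paper cites \v{C}encov for this theorem but proves the analogous covariance version (Theorem~\ref{theorem_invariance_covariance}) in Appendix~A1 by exactly your three-step scheme — permutation symmetry at the uniform distribution, transport to rational points via block-collapsing/splitting embeddings with matching of the constants across $n$, then density and continuity. The only cosmetic difference is that you pin down the invariant form at $u_n$ by Schur's lemma on the (absolutely irreducible) reflection representation, whereas the paper computes the $S_n$-invariant matrix explicitly as $a_n\delta_{ij}+b_n$ on all of $\bR^{\Omega_n}$; both are valid, and your version is the natural one for the tangent-space formulation since the $b_n$-part vanishes on $\bigl(\bR^{\Omega_n}\bigr)_0$.
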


We now proceed to the invariance property of co-metrics. Let us consider the same situation as 
\eqref{varphi_psi_given}-\eqref{q_p_varphi_psi}, which implies that
\be
\varphi_p^*\circ\psi_q^* = \id_{\cotan{p}(M)}. 
\ee
Then it follows from the monotonicity \eqref{monotonicity_cotangent_classical} that, 
 for any $\alpha_{p}\in\cotan{p}(M)$, 
\be
\norm{\alpha_p}_{M, p} \geq \norm{\psi^* (\alpha_p)}_{N, q} \geq \norm{\varphi^*(\psi^*  (\alpha_p))}_{M, p} = \norm{\alpha_p}_{M, p},
\ee
so that we have the invariance of the Fisher co-metric 
\be
\any \alpha_p \in \cotan{p}(M), \;\; 
\norm{\alpha_p}_{M, p} =  \norm{\psi^* (\alpha_p)}_{N, q}, 
\label{invariance_Fisher_cotangent_1}
\ee
which is equivalent to 
\be
\any \alpha_p, \beta_p \in \cotan{p}(M), \;\; 
g_{M, p}(\alpha_p, \beta_p) =  g_{N, q} (\psi^* (\alpha_p), \psi^* (\beta_p))
\label{invariance_Fisher_cotangent_1var}
\ee
This means that $\psi^*_{q} : \cotan{p}(M) \rightarrow \cotan{q} (N)$ is isometry, which is represented as
\be
(\psi^*_{q} )^\dagger \circ \psi^*_{q}  = \id_{\cotan{p}(M)}, 
\label{invariance_Fisher_cotangent_2}
\ee
where $(\psi^*_{q} )^\dagger  : \cotan{q}(N) \rightarrow \cotan{p} (M)$ denotes the adjoint (Hermitian conjugate) of 
$\psi^*_{q}$ w.r.t.\ the inner products $g_{M, p}$ and $g_{N, q}$ on the cotangent spaces.  
Due to $\psi^*_{q} = \transposedmap{(\psi_{*, q})}$,  \eqref{invariance_Fisher_cotangent_2} can 
be rewritten as
\be
\psi_{*, q} \circ (\psi_{*, q})^\dagger =  \id_{\tan{p}(M)}. 
\label{invariance_Fisher_cotangent_3}
\ee 

We observed in Section~\ref{sec_monotonicity} that the monotonicity of metrics \eqref{monotonicity_tangent_classical}  and that of co-metrics \eqref{monotonicity_cotangent_classical}  are 
logically equivalent.  
On the other hand, such an equivalence does not hold for the invariance properties \eqref{invariance_Fisher_tangent_1} and \eqref{invariance_Fisher_cotangent_1}. 
Indeed, a linear-algebraic consideration shows that the implications 
\red{
\eqref{psi*_varphi*_id_T} $\wedge$  \eqref{invariance_Fisher_tangent_2} $\Rightarrow$  \eqref{invariance_Fisher_cotangent_3}} and \red{\eqref{psi*_varphi*_id_T} $\wedge$  \eqref{invariance_Fisher_cotangent_3} $\Rightarrow$  \eqref{invariance_Fisher_tangent_2}} do not hold  
unless  $\varphi_{*, p}$ is a linear isomorphism (cf.\ Lemma~\ref{lemma_for_strong_invariance}). 
This means that we cannot expect \v{C}encov's theorem to yield a corollary which states that the Fisher co-metric 
is characterized by the invariance \eqref{invariance_Fisher_cotangent_1}.  Nevertheless, the statement itself is 
true as explained below. 

Let us return to the situation of \eqref{Psi_Phi_id}.  We call a Markov map $\Psi : \cP_2 \rightarrow \cP_1$ 
a \emph{Markov co-embedding} when there exists a Markov embedding $\Phi: \cP_1 
\rightarrow \cP_2$ satisfying \eqref{Psi_Phi_id}.  
As an example of \eqref{invariance_Fisher_cotangent_1}, 
\eqref{Psi_Phi_id} implies the invariance
\be
\any p\in \cP_1, \any \alpha_p\in\cotan{p} (\cP_1), \;\; 
 \norm{\alpha_p}_p = \norm{\Psi^* (\alpha_p)}_{\Phi (p)}, 
  \label{Invariance_cotangent_Psi_Phi_1}
\ee
which can be rewritten as
\be
\any q\in \Phi (\cP_1), \any \alpha_{\Psi (q)} \in\cotan{\Psi (q)} (\cP_1), \;\; 
 \norm{\alpha_{\Psi (q)}}_{\Psi (q)} = \norm{\Psi^* (\alpha_{\Psi (q)})}_{q}.
 \label{Invariance_cotangent_Psi_Phi_2}
\ee
Actually, the range $\any q\in \Phi (\cP_1)$ in the above equation can be extended to  $\any q\in \cP_2$ 
for the reason described below.  

It is known (e.g.\ Lemma~9.5 of  \cite{cencov})  that every pair $(\Phi, \Psi)$ of Markov embedding and co-embedding 
satisfying \eqref{Psi_Phi_id} is represented in the following form: 
\be
\any q \in \cP_2, \;\; \Psi (q) = q^F \quad\text{with}\quad q^F (x) := \sum_{y\in F^{-1} (x)} q(y), 
\label{Psi_F}
\ee
and
\be
\any p \in \cP_1, \;\; \Phi (p) = \sum_{x\in \Omega_1} p(x)\, r_x , 
\label{Phi_F_rx}
\ee
where 
$F$ is a surjection $\Omega_2\rightarrow \Omega_1$ which yields the partition 
$\Omega_2 = \bigsqcup_{x\in \Omega_1} F^{-1} (x)$, and $\{r_x\}_{x\in \Omega_1}$ 
is a family of probability distributions on $\Omega_2$ such that the support of 
$r_x$ is $F^{-1}(x)$ for every $x\in \Omega_1$. 
We note that a Markov co-embedding $\Psi$ is determined by $F$ alone, while 
a Markov embedding $\Phi$ is determined by $F$ and $\{r_x\}_{x\in \Omega_1}$ together. 
Consequently, $\Psi$ is uniquely determined from $\Phi$, while $\Phi$ for a given $\Psi$ 
has the degree of freedom corresponding to $\{r_x\}_{x\in \Omega_1}$. 
According to this fact, 
when a Markov co-embedding $\Psi$ and a distribution $q\in \cP_2$ are arbitrarily given, 
we can always choose a Markov embedding $\Phi$ satisfying \eqref{Psi_Phi_id} and 
$q \in \Phi (\cP_1)$; indeed, defining $r_x$ by 
\be
r_x (y) := \left\{
\begin{array}{ll}
q(y) / q^F (x) & \text{if} \;\; F(y) = x \\
0 & \text{otherwise}, 
\end{array}\right. 
\label{def_rx(y)}
\ee
the resulting $\Phi$ satisfies $q = \Phi (q^F) \in \Phi (\cP_1)$. 
This is the reason why $\any q\in \Phi (\cP_1)$ in \eqref{Invariance_cotangent_Psi_Phi_2} can be replaced with  $\any q\in \cP_2$.
We thus have 
\be
\any q\in \cP_2, \any \alpha_{\Psi (q)} \in\cotan{\Psi (q)} (\cP_1), \;\; 
 \norm{\alpha_{\Psi (q)}}_{\Psi (q)} = \norm{\Psi^* (\alpha_{\Psi (q)})}_{q}, 
 \label{Invariance_cotangent_Psi}
\ee
or equivalently, 
\ba
\any q\in \cP_2,  \any \alpha_{\Psi (q)},\;  &\any  \beta_{\Psi (q)} \in\cotan{\Psi (q)} (\cP_1), \;\; 
\noret 
& 
g_{\Psi (q)} (\alpha_{\Psi (q)}, \beta_{\Psi (q)}) = 
g_{q} (\red{\Psi^*} (\alpha_{\Psi (q)}), \red{\Psi^*}( \beta_{\Psi (q)})) 
 \label{Invariance_cotangent_Psi_var}
\end{align}
for every Markov co-embedding $\Psi$. 

The invariance \eqref{Invariance_cotangent_Psi} characterizes the Fisher co-metric up to a constant factor. Namely, we have the following theorem.

\begin{theorem}
\label{theorem_invariance_co-metric}
For $n=2, 3, \ldots$, let $\Omega_n:= \{1, 2, \ldots , n\}$ and $\cP_n:= \cP (\Omega_n)$, and let 
$g_n$ be the Fisher co-metric on $\cP_n$.  Suppose that we are given a sequence $\{h_n\}_{n=2}^\infty$, where 
$h_n$ is a contravariant tensor field of degree 2 on $\cP_n$ which  continuously maps each point $p\in \cP_n$ to a bilinear form $h_{n, p} : \cotan{p} (\cP_n)^2 \rightarrow \bR$.  Then the following two conditions are equivalent.
\begin{enumerate}
\item[(i)] $\some c\in \bR, \, \any n, \;\; h_n = c g_n$.
\item[(ii)] For any $m\leq n$ and any Markov co-embedding $\Psi : \cP_n \rightarrow \cP_m$, it holds that
\ba
\any q\in \cP_n, \, \any \alpha_{\Psi (q)}, & 
\any \beta_{\Psi (q)}  \in \cotan{\Psi (q)} (\cP_m), 
\noret & 
h_{m, \Psi (q)} (\alpha_{\Psi (q)}, \beta_{\Psi (q)}) = h_{n, q} (\Psi^* (\alpha_{\Psi (q)}), \Psi^* (\beta_{\Psi (q)})).
\label{(ii)_theorem_invariance_co-metric}
\end{align}
\end{enumerate} 
\end{theorem}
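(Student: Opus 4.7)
The direction (i)$\Rightarrow$(ii) is immediate from the derivation preceding~\eqref{Invariance_cotangent_Psi_var} applied to each $n$, so I concentrate on (ii)$\Rightarrow$(i). My plan is to pull $h_{n,p}$ back along the surjection $\delta_p\colon\bR^{\Omega_n}\to\cotan{p}(\cP_n)$ of Proposition~\ref{prop_d<A>p_linear_iso} to a bilinear form $\tilde{h}_{n,p}$ on $\bR^{\Omega_n}$, translate the invariance into a constraint on the $\tilde{h}_{n,p}$, determine them completely at uniform distributions, and then propagate the result to every rational $p\in\cP_n$ via surjections from auxiliary finite sets.

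Set $\tilde{h}_{n,p}(A,B):=h_{n,p}(\delta_p(A),\delta_p(B))$ for $A,B\in\bR^{\Omega_n}$; this descends to $\bR^{\Omega_n}/\bR$. Every Markov co-embedding $\Psi\colon\cP_n\to\cP_m$ has the form $\Psi(q)=q^F$ for a surjection $F\colon\Omega_n\to\Omega_m$ by~\eqref{Psi_F}, and is therefore $\Phi_W$ for the deterministic channel $W(i\,\vert\,j)=\mathbf{1}[F(j)=i]$; since $E_W(A\,\vert\,\cdot)=A\circ F$, Eq.~\eqref{Phi*_delta_Phi(p)_A} gives $\Psi^*(\delta_{\Psi(q)}(A))=\delta_q(A\circ F)$. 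Hypothesis~(ii) therefore translates to
\[
\tilde{h}_{m,q^F}(A,B)=\tilde{h}_{n,q}(A\circ F,B\circ F)
\]
for every surjection $F\colon\Omega_n\to\Omega_m$, every $q\in\cP_n$, and all $A,B\in\bR^{\Omega_m}$.

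Let $u_n\in\cP_n$ be the uniform distribution. Taking $F=\sigma\in S_n$ and using $u_n^\sigma=u_n$, the above forces $\tilde{h}_{n,u_n}$ to be $S_n$-invariant. A standard computation shows the space of $S_n$-invariant bilinear forms on $\bR^{\Omega_n}$ is two-dimensional, spanned by $(A,B)\mapsto\sum_i A(i)B(i)$ and $(A,B)\mapsto(\sum_i A(i))(\sum_i B(i))$---both symmetric, so symmetry of $h_n$ need not be assumed---and imposing vanishing on constants leaves a one-dimensional subspace; hence $\tilde{h}_{n,u_n}=c_n\Cov_{u_n}$ for some $c_n\in\bR$. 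To show $c_n$ is independent of $n$, I plan to apply invariance to the equipartition surjection $F\colon\Omega_{aM}\to\Omega_M$, for which $u_{aM}^F=u_M$ and a direct computation gives $\Cov_{u_{aM}}(A\circ F,B\circ F)=\Cov_{u_M}(A,B)$; this yields $c_{aM}=c_M$, and routing through $L=\mathrm{lcm}(N,M)$ produces a single universal constant $c$.

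To pass from uniform distributions to arbitrary $p$, I would use the observation that any rational $p\in\cP_n$ with $p(i)=k_i/N$ equals $u_N^F$ for the surjection $F\colon\Omega_N\to\Omega_n$ whose fibers have sizes $k_1,\dots,k_n$; a direct manipulation then gives $\Cov_{u_N}(A\circ F,B\circ F)=\Cov_p(A,B)$, so invariance yields $\tilde{h}_{n,p}=c\Cov_p$ at every rational $p$, and continuity extends this to all of $\cP_n$, giving $h_n=c g_n$. The main conceptual obstacle is precisely this extension step: unlike the situation for \v{C}encov's theorem on the Fisher metric, invariance along a Markov \emph{embedding} $\cP_m\hookrightarrow\cP_n$---which would let one transport the value at a controlled point into an arbitrary point of $\cP_n$---is not available here. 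What rescues the argument is the dual observation that surjective partitions $F\colon\Omega_N\to\Omega_n$ can push the uniform on a \emph{larger} sample space to any rational distribution on a smaller one, so values at uniform distributions on all $\cP_N$ already determine $h_n$ densely on $\cP_n$.
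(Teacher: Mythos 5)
Your proposal is correct and follows essentially the same route as the paper: the paper also translates condition (ii) into the statement $\tilde{h}_{m,q^F}(A,B)=\tilde{h}_{n,q}(A\circ F,B\circ F)$ via \eqref{Psi*_delta_A_F} (packaged as Theorem~\ref{theorem_invariance_covariance} on bilinear forms on $\bR^{\Omega_n}$), then determines the form at uniform distributions by permutation invariance, unifies the constants with equipartition surjections $\Omega_{mn}\to\Omega_n$, reaches rational points as $u_N^F$ for partition surjections, and concludes by continuity. The only cosmetic difference is that you exploit $\kernel\delta_p=\bR$ to get vanishing on constants for free, while the paper keeps the two-parameter family $c_1\inprod{\cdot}{\cdot}_p+c_2\expect{\cdot}_p\expect{\cdot}_p$ throughout and specializes at the end.
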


The proof will be given by rewriting the statement in terms of variance/covariance for random variables. 
Suppose that a Markov co-embedding $\Psi : \cP_1\rightarrow \cP_2$ is represented as \eqref{Psi_F} by a surjection $F: \Omega_2 \rightarrow \Omega_1$. 
Then $\Psi$ is represented as $\Psi = \Phi_W$ by the channel $W$ 
from $\Omega_2$ to $\Omega_1$ defined by 
\be
W (x \,\vert\, y) = \left\{ \begin{array}{cl} 1 &\;\text{if}\;\; x = F(y), 
\\ 0 & \; \text{otherwise}.
\end{array} \right. 
\ee
For an arbitrary $A\in \bR^{\Omega_1}$, its conditional expectation w.r.t.~$W$ is represented as
\be
E_W (A\,\vert\, y) = \sum_{x\in \Omega_1} W(x\,\vert\, y) A(x) = A (F(y)), 
\ee
so that it follows from \eqref{Phi*_delta_Phi(p)_A} that 
\be
\Psi^* (\delta_{q^F} (A) ) = \delta_q (A\circ F), 
\label{Psi*_delta_A_F}
\ee
where we have invoked $\Psi (q) = q^F$ from \eqref{Psi_F}. 
Hence, \eqref{Invariance_cotangent_Psi} and  \eqref{Invariance_cotangent_Psi_var} 
are rewritten as
\be
V_{q^F} (A) = V_q (A\circ F)
\label{invariance_variance}
\ee
and
\be
\Cov_{q^F} (A, B) = \Cov_q (A\circ F, B\circ F).
\label{invariance_covariance}
\ee
These identities themselves are obvious, but what is important is 
that 
they characterize the variance/covariance up to a constant factor.  
Namely, we have the following theorem. 

\begin{theorem}
\label{theorem_invariance_covariance}
In the same situation as Theorem~\ref{theorem_invariance_co-metric}, 
suppose that we are given a sequence $\{\gamma_n\}_{n=2}^\infty$, where 
$\gamma_n$ is a map which continuously maps each point $p\in \cP_n$ to a bilinear form $\gamma_{n, p}$ 
on $\bR^{\Omega_n}$. 
Then the following conditions (i) and (ii) are equivalent.
\begin{enumerate}
\item[(i)] $\some c\in \bR, \, \any n,  \any p\in \cP_n, 
\;\; \gamma_{n, p}  = c\, \mathrm{Cov}_p$. 
\item[(ii)] : (ii-1) $\wedge$ (ii-2)
\begin{itemize}
\item[(ii-1)] $ \any n, \any p\in \cP_n, \any A\in \bR^{\Omega_n}, \;\;  
\gamma_{n, p} (A, 1) =0$. 
\item[(ii-2)] 
For any $m\leq n$ and any surjection $F: \Omega_n \rightarrow \Omega_m$, 
it holds that 
\be
\any p\in \cP_n, \any A, B \in \bR^{\Omega_m},  \;\; 
\gamma_{m, p^F} (A, B) = \gamma_{n, p} (A\circ F, B\circ F).
\label{invariance_gamma_1}
\ee
\end{itemize}
\end{enumerate} 
\end{theorem}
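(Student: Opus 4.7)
For (i)~$\Rightarrow$~(ii), $\Cov_p (A, 1) = 0$ handles (ii-1) and (ii-2) is exactly the identity \eqref{invariance_covariance}.

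For (ii)~$\Rightarrow$~(i) my plan is to reduce to Theorem~\ref{theorem_invariance_co-metric}. Since $\kernel \delta_p = \bR$ by Proposition~\ref{prop_d<A>p_linear_iso}, the assignment
$h_{n, p}(\delta_p(A), \delta_p(B)) := \gamma_{n, p}(A, B)$
defines a well-posed continuous bilinear form on $\cotan{p}(\cP_n)^2$ if and only if $\gamma_{n, p}(A, 1) = \gamma_{n, p}(1, B) = 0$ for all $A, B$. The first identity is (ii-1); the second, which I write as $\mu_{n, p}(B) := \gamma_{n, p}(1, B) \equiv 0$, is the nontrivial obstruction. Once this is in place, (ii-2) combined with the identity $\Psi^*(\delta_{q^F}(A)) = \delta_q(A \circ F)$ from \eqref{Psi*_delta_A_F} translates verbatim into the invariance \eqref{(ii)_theorem_invariance_co-metric} for the sequence $\{h_n\}$. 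Theorem~\ref{theorem_invariance_co-metric} then yields $h_n = c\, g_n$ for some constant $c$, whence $\gamma_{n, p}(A, B) = c\, g_{n, p}(\delta_p(A), \delta_p(B)) = c\, \Cov_p(A, B)$ by the definition of the Fisher co-metric.

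To kill $\mu$, first observe from (ii-2) that $\mu_{m, p^F}(B) = \mu_{n, p}(B \circ F)$ for every surjection $F : \Omega_n \to \Omega_m$, and from (ii-1) with $A = 1$ that $\mu_{n, p}(1) = 0$. On $\cP_2$ these two facts force $\mu_{2, (t, 1-t)}(B) = h(t)\, (B(1) - B(2))$ for a continuous function $h : (0, 1) \to \bR$, and the transposition of $\Omega_2$ gives the antisymmetry $h(1 - t) = -h(t)$. For $n \geq 3$ and $y \in \Omega_n$, apply the invariance to the surjection $F_y : \Omega_n \to \Omega_2$ that sends $y$ to $1$ and every other point to $2$; since $p^{F_y} = (p_y, 1 - p_y)$, this yields $\mu_{n, p}(e_y) = h(p_y)$, where $e_y$ denotes the indicator of $y$, and hence $\mu_{n, p}(B) = \sum_y h(p_y)\, B(y)$. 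Finally, applying the invariance to the merging surjection $F : \Omega_3 \to \Omega_2$ with $F^{-1}(1) = \{1, 2\}$, evaluated at $p = (a, b, 1 - a - b)$ with $B = e_1 \in \bR^{\Omega_2}$, produces the Cauchy equation $h(a + b) = h(a) + h(b)$ on the open region $\{a, b > 0,\; a + b < 1\}$. Continuity forces $h(t) = c t$, and then the antisymmetry forces $c = 0$, so $h \equiv 0$ and $\mu \equiv 0$.

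The main obstacle is this last step: condition (ii-1) imposes no direct constraint on $\mu_{n, p}$, so its vanishing must be distilled from the cross-$n$ invariance (ii-2) alone (plus continuity), by combining the point-pickout surjections $F_y$ (which expose the one-variable function $h$), the merging surjection (which produces Cauchy), and permutation invariance (which produces the antisymmetry). Without this step one cannot descend $\gamma_{n, p}$ to the cotangent space and invoke Theorem~\ref{theorem_invariance_co-metric}.
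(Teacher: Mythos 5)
Your forward direction and your $\mu$-killing argument are both correct --- in fact the derivation of $\gamma_{n,p}(1,B)\equiv 0$ from (ii-1) and (ii-2) alone, via the point-pickout surjections $F_y$, the merging surjection, the Cauchy equation and the antisymmetry $h(1-t)=-h(t)$, is a clean and valid piece of work, and the descent of $\gamma_{n,p}$ to a well-defined continuous bilinear form $h_{n,p}$ on $\cotan{p}(\cP_n)^2$ via $\kernel\delta_p=\bR$ is also correct.

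The genuine gap is structural: your reduction is circular relative to the logical architecture of the paper. The paper does not possess an independent proof of Theorem~\ref{theorem_invariance_co-metric}; it explicitly states that Theorem~\ref{theorem_invariance_co-metric} does \emph{not} follow from \v{C}encov's theorem (Theorem~\ref{theorem_invariance_metric}), proves Theorem~\ref{theorem_invariance_covariance} from scratch in the appendix, and then obtains Theorem~\ref{theorem_invariance_co-metric} as an immediate corollary of Theorem~\ref{theorem_invariance_covariance} via exactly the dictionary $\Psi^*(\delta_{q^F}(A))=\delta_q(A\circ F)$ that you use in the opposite direction. So the entire analytic core of the theorem --- the \v{C}encov-style argument that actually pins down the bilinear form --- is absent from your proof: it has been outsourced to a result whose only available proof is the theorem you are trying to establish. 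Concretely, what is missing is the content of steps (a)--(d) of the paper's appendix proof: permutation invariance at the uniform distribution $u_n$ forces $\gamma_{n,u_n}(e_{n,i},e_{n,j})=a_n\delta_{i,j}+b_n$, i.e.\ $\gamma_{n,u_n}=c_{1,n}\inprod{\cdot}{\cdot}_{u_n}+c_{2,n}\expect{\cdot}_{u_n}\expect{\cdot}_{u_n}$; block surjections $\Omega_{mn}\to\Omega_n$ show the constants are independent of $n$; refinement of a rational $p$ into a uniform distribution on a larger set handles all rational points; continuity finishes. (Note also that the paper proves the stronger equivalence (i)${}'\Leftrightarrow$(ii-2) of Remark~\ref{remark_theorem_invariance_L2} without using (ii-1) at all, and only imposes (ii-1) at the very end to force $c_2=-c_1$; in that route the vanishing of $\gamma_{n,p}(1,B)$ comes out automatically, so your Cauchy-equation lemma, while correct, is not needed.) To repair your proof you would either have to supply an independent proof of Theorem~\ref{theorem_invariance_co-metric}, or run the \v{C}encov-style argument directly on $\{\gamma_n\}$ as the paper does.
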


Note that, if we assume that $\{\gamma_n\}_n$ are all symmetric tensors, then 
\eqref{invariance_gamma_1} can be replaced with
\be
\any p\in \cP_n, \any A \in \bR^{\Omega_m},  \;\; 
\gamma_{m, p^F} (A, A) = \gamma_{n, p} (A\circ F, A\circ F), 
\label{invariance_gamma_2}
\ee
which corresponds to \eqref{invariance_variance}. 

See \ref{subsec_proof_theorem_invariance_covariance} in Appendix for the proof, 
where we use an argument similar to 
\v{C}encov's proof of Theorem~\ref{theorem_invariance_metric}. 
It is obvious that 
Theorem~\ref{theorem_invariance_co-metric} 
immediately follows from this theorem. 

%koko
\begin{remark}
\label{remark_theorem_invariance_L2}
If we delete (ii-1) from (ii) in Theorem~\ref{theorem_invariance_covariance}, 
then we have (i)${}'$ $\Leftrightarrow $ (ii-2)  by replacing (i) with 
\begin{enumerate}
\item[(i)${}'$]
$
\some c_1, \some c_2\in \bR, \, \any n,  \any p\in \cP_n,$ 
\\ \qquad 
$ \any A, B\in \bR^{\Omega_n}, 
\;\; \gamma_{n, p}(A, B)  = c_1\, \inprod{A}{B}_p + c_2\, \expect{A}_p \expect{B}_p.
$
\end{enumerate}
We give a proof for  (i)${}'$ $\Leftrightarrow $ (ii-2) in \ref{subsec_proof_theorem_invariance_covariance}, 
from which Theorem~\ref{theorem_invariance_covariance} is straightforwad. 
\end{remark}

\section{Strong invariance}
\label{sec_strong_invariance}

In the preceding two sections, we have observed the following facts.
\begin{itemize}
\item The monotonicity of metrics and that of co-metrics are logically equivalent.
\item The monotonicity logically implies the invariance of metrics and that of co-metrics.
\item The invariance of metrics and that of co-metrics are not logically equivalent.
\end{itemize}
In this section we introduce a new notion of invariance called the strong invariance, 
and show that: 
\begin{itemize}
\item The strong invariance of metrics and that of co-metrics are logically equivalent.
\item The monotonicity of metrics/co-metrics logically implies the strong invariance of metrics/co-metrics.
\item The strong invariance of metrics/co-metrics logically implies the 
invariance of metrics and that of co-metrics. 
\end{itemize}

Recall the situation of \eqref{varphi_psi_given}, \eqref{psi_varphi_id} and \eqref{q_p_varphi_psi}, where 
we are given 
submanifolds $M \subset \cP_1$ and $N \subset \cP_2$, Markov mappings 
$\varphi : M \rightarrow N$ and $\psi : N \rightarrow M$ satisfying $\psi \circ \varphi = \id_M$, 
and points $p\in M$ and $q\in N$ satisfying $q = \varphi (p)$ and $p = \psi (q)$.  
Then we have the following proposition. 

\begin{proposition} 
\label{prop_strong_invariance_metric}
The Fisher metric\red{s} $g_M$ and $g_N$ on $M$ and $N$ satisfy 
\be
\any X_p\in \tan{p} (M), \any Y_q\in \tan{q} (N), \;\; 
g_{M, p} (X_p, \psi_* (Y_q)) = g_{N, q} (\varphi_* (X_p), Y_q), 
\label{strong_invariance_Fisher_tangent_1}
\ee
or equivalently, 
\be
\psi_{*, q} = (\varphi_{*, p})^\dagger.
\label{strong_invariance_Fisher_tangent_1var}
\ee
In addition, $(\psi_{*, q})^\dagger \circ \psi_{*, q} = \varphi_{*, p} \circ \psi_{*, q}$ 
 is the orthogonal projector from 
$\tan{q} (N)$ onto $\varphi_{*, p} (\tan{p} (M)) 
= \tan{q} (\varphi (M))$. 
\end{proposition}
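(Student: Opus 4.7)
The plan is to reduce everything to a single orthogonality fact, namely that $\varphi_{*,p}(\tan{p}(M))$ and $\kernel \psi_{*,q}$ are orthogonal complements in $(\tan{q}(N), g_{N,q})$. Once that is in hand, \eqref{strong_invariance_Fisher_tangent_1} follows by splitting an arbitrary $Y_q \in \tan{q}(N)$ along these two subspaces, and the remaining assertions are formal linear-algebraic consequences of \eqref{strong_invariance_Fisher_tangent_1} together with $\psi_{*,q} \circ \varphi_{*,p} = \id_{\tan{p}(M)}$.

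To establish the orthogonality, I would fix $X_p \in \tan{p}(M)$ and $Z_q \in \kernel \psi_{*,q}$, and consider the one-parameter family $V_q(t) := \varphi_*(X_p) + t\, Z_q \in \tan{q}(N)$. Since $\psi_{*,q} \circ \varphi_{*,p} = \id$ by \eqref{psi*_varphi*_id_T} and $\psi_{*,q}(Z_q)=0$, one has $\psi_*(V_q(t)) = X_p$ for every $t\in\bR$. Applying the monotonicity \eqref{monotonicity_tangent_classical} to $V_q(t)$ under $\psi$, and invoking the invariance identity $\norm{X_p}_{M,p}^2 = \norm{\varphi_*(X_p)}_{N,q}^2$ from \eqref{invariance_Fisher_tangent_1}, one obtains
\[
\norm{\varphi_*(X_p)}_{N,q}^2 \leq \norm{\varphi_*(X_p)}_{N,q}^2 + 2t\, g_{N,q}(\varphi_*(X_p), Z_q) + t^2\, \norm{Z_q}_{N,q}^2
\]
for every $t \in \bR$, which forces $g_{N,q}(\varphi_*(X_p), Z_q) = 0$.

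With the orthogonality in hand, I decompose an arbitrary $Y_q\in\tan{q}(N)$ as $Y_q = \varphi_*(\psi_*(Y_q)) + Z_q$ with $Z_q := Y_q - \varphi_*(\psi_*(Y_q))$; the relation $\psi_{*,q}\circ \varphi_{*,p} = \id$ gives $\psi_*(Z_q) = 0$, so the orthogonality kills the $Z_q$-contribution to $g_{N,q}(\varphi_*(X_p), Y_q)$ and the invariance \eqref{invariance_Fisher_tangent_1var} converts the surviving term into $g_{M,p}(X_p, \psi_*(Y_q))$, yielding \eqref{strong_invariance_Fisher_tangent_1}. The reformulation \eqref{strong_invariance_Fisher_tangent_1var} as $\psi_{*,q} = (\varphi_{*,p})^\dagger$ is then simply the definition of the adjoint with respect to the inner products $g_{M,p}$ and $g_{N,q}$. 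Taking adjoints gives $(\psi_{*,q})^\dagger = \varphi_{*,p}$, so $(\psi_{*,q})^\dagger\circ\psi_{*,q} = \varphi_{*,p}\circ\psi_{*,q}$, and this endomorphism of $\tan{q}(N)$ is idempotent (again by $\psi_{*,q}\circ \varphi_{*,p}=\id$) and self-adjoint, hence the orthogonal projector onto its image $\varphi_{*,p}(\tan{p}(M))$. The identification $\varphi_{*,p}(\tan{p}(M)) = \tan{q}(\varphi(M))$ follows because $\varphi$ is an embedding: it is injective by \eqref{psi_varphi_id}, and $\varphi_{*,p}$ is injective by \eqref{psi*_varphi*_id_T}.

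The principal obstacle is the orthogonality step, since this is where the specific geometric content (monotonicity combined with the already-established invariance) enters in an essential way; for a general pair of linear maps satisfying only $\psi_* \circ \varphi_* = \id$ the composition $\varphi_*\circ\psi_*$ need not be an orthogonal projector, so the quadratic-inequality argument above is indispensable. Everything else is formal linear algebra.
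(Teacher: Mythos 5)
Your proof is correct, but it takes a different route from the paper's. The paper proves this proposition by feeding three previously established identities --- $\psi_{*,q}\circ\varphi_{*,p}=\id$ from \eqref{psi*_varphi*_id_T}, the metric invariance $(\varphi_{*,p})^\dagger\circ\varphi_{*,p}=\id$ from \eqref{invariance_Fisher_tangent_2}, and the co-metric invariance $\psi_{*,q}\circ(\psi_{*,q})^\dagger=\id$ from \eqref{invariance_Fisher_cotangent_3} --- into the purely algebraic Lemma~\ref{lemma_for_strong_invariance}, whose key step is the identity $(B-A^\dagger)(B-A^\dagger)^\dagger=BB^\dagger-BA-A^\dagger B^\dagger+A^\dagger A=0$; the projector claim is then part of the lemma's conclusion. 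You instead bypass the co-metric invariance entirely: you extract the orthogonality of $\varphi_{*,p}(\tan{p}(M))$ and $\kernel\psi_{*,q}$ directly from the tangent-space monotonicity \eqref{monotonicity_tangent_classical} applied to $\psi$ via the quadratic perturbation $\varphi_*(X_p)+tZ_q$, and then only need the metric invariance to finish. Both arguments are sound and rest on the same ultimate inputs (monotonicity plus the invariance it implies). What the paper's route buys is conceptual transparency: it exhibits the strong invariance as precisely the conjunction of the tangent and cotangent invariances, which is the organizing theme of Section~\ref{sec_strong_invariance}. What your route buys is self-containedness on the tangent side --- you never need to have set up the cotangent monotonicity and invariance of Section~\ref{sec_monotonicity}--\ref{sec_invariance} --- and your variational derivation of the orthogonality makes geometrically explicit why $\varphi_*\circ\psi_*$ is an \emph{orthogonal} (not merely oblique) projector, a point the paper obtains more formally. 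Your closing observation that $\varphi_*\circ\psi_*$ need not be an orthogonal projector for arbitrary one-sided inverses is exactly the paper's remark that the two invariances are not logically equivalent, so your quadratic-inequality step is indeed carrying the same essential load as the paper's appeal to \eqref{invariance_Fisher_cotangent_3}.
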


The property \eqref{strong_invariance_Fisher_tangent_1} is called 
the \emph{strong invariance} of the Fisher metric. 
The proposition will be proved by using the following lemma. 

\begin{lemma} 
\label{lemma_for_strong_invariance}
Let $U$ and $V$ be finite-dimensional metric linear spaces, 
and let $A : U\rightarrow V$ and $B: V\rightarrow U$ be linear maps satisfying $B A = I$. Then the following two conditions are equivalent.
\begin{enumerate}
\item[(i)] $A^\dagger A = B B^\dagger = I$. 
\item[(ii)] $B = A^\dagger$.
\end{enumerate}
When these conditions hold, $B^\dagger B = A B$ is the orthogonal projector from $V$ 
onto the image $\mathrm{Im}\, A$ of $A$.
\end{lemma}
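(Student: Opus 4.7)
The plan is to treat the two directions of the equivalence separately and then read off the projector claim from the identification $B = A^\dagger$. The forward direction (ii) $\Rightarrow$ (i) is immediate: substituting $B = A^\dagger$ into the standing hypothesis $BA = I$ gives $A^\dagger A = I$, and $BB^\dagger = A^\dagger (A^\dagger)^\dagger = A^\dagger A = I$.

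For the non-trivial direction (i) $\Rightarrow$ (ii), I would introduce the defect $C := B - A^\dagger$ and aim to show $C = 0$. Since both $B$ and $A^\dagger$ are left inverses of $A$ under (i), subtracting $BA = A^\dagger A$ yields $CA = 0$; taking adjoints gives $A^\dagger C^\dagger = 0$. The key computation is then the expansion
\[
I = BB^\dagger = (A^\dagger + C)(A + C^\dagger) = A^\dagger A + A^\dagger C^\dagger + CA + CC^\dagger = I + CC^\dagger,
\]
which forces $CC^\dagger = 0$. Positivity then closes the argument: for every $u\in U$, $\norm{C^\dagger u}^2 = \langle CC^\dagger u, u\rangle = 0$, so $C^\dagger = 0$ and hence $C = 0$, i.e.\ $B = A^\dagger$.

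For the projector claim, with $B = A^\dagger$ one has $B^\dagger = A$, so $B^\dagger B = AA^\dagger = AB$. This operator is self-adjoint by inspection and idempotent because $(AA^\dagger)^2 = A(A^\dagger A)A^\dagger = AA^\dagger$, hence an orthogonal projector. Its image is plainly contained in $\mathrm{Im}\,A$, and for any $u\in U$ one has $AA^\dagger (Au) = A(A^\dagger A u) = Au$, so the image is exactly $\mathrm{Im}\,A$.

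The only real obstacle is spotting the right ansatz: once one writes $B = A^\dagger + C$, the two hypotheses of (i) combine in the expansion of $BB^\dagger$ to isolate the non-negative operator $CC^\dagger$, and positivity finishes everything. The hypothesis $BA = I$ is used precisely to force the cross terms $CA$ and (via adjunction) $A^\dagger C^\dagger$ to vanish, without which the expansion would not collapse.
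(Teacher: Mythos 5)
Your argument is correct and is essentially the paper's own proof: the paper likewise establishes (i) $\Rightarrow$ (ii) by showing $(B-A^\dagger)(B-A^\dagger)^\dagger = BB^\dagger - BA - A^\dagger B^\dagger + A^\dagger A = 0$, which is exactly your expansion of $BB^\dagger$ after writing $B = A^\dagger + C$, and it handles the projector claim via the same idempotence computation. No gaps.
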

%%%%%%

\begin{proof}
It is obvious that (ii) $\Rightarrow$ (i) under the assumption $B A = I$. 
\red{
Conversely, if we assume (i) with $B A = I$, then we have
%To prove the converse, assume (i). Then we have
\ba (B- A^\dagger)  (B - A^\dagger)^\dagger &= B B^\dagger - B A - A^\dagger B^\dagger + A^\dagger A 
\noret &= I - I  - I + I = 0, 
\nonumber 
\end{align}
from which (ii) follows.
}

Assume (i) and (ii). Then $(B^\dagger B)^2 = B^\dagger B$ due to $B B^\dagger =I$, which 
implies that $B^\dagger B$ is the orthogonal projector onto  $\mathrm{Im}\, B^\dagger 
= \mathrm{Im}\, A$. 
\end{proof}

\noindent \red{{\it Proof of Prop.~\!\ref{prop_strong_invariance_metric}} \; 
%We first note that, due to $\psi^*_{q} = \transposedmap{(\psi_{*, q})}$,  the invariance of co-metrics 
%\eqref{invariance_Fisher_cotangent_2} can 
%be rewritten as
%\be
%\psi_{*, q} \circ (\psi_{*, q})^\dagger =  \id_{\tan{p}(M)}. 
%\label{invariance_Fisher_cotangent_3}
%\ee 
Letting} 
$U := \tan{p} (M)$, $V :=\tan{q} (\red{N})$, $A := \varphi_{*, p}$ and $B:= \psi_{*,q}$ 
in the previous lemma, we see from \eqref{psi*_varphi*_id_T}, \eqref{invariance_Fisher_tangent_2} and \eqref{invariance_Fisher_cotangent_3} that the assumption $BA = I$ and the condition (i) are satisfied, 
so that we have (ii), which means \eqref{strong_invariance_Fisher_tangent_1var}.
\qed

\medskip

As can be seen from Lemma~\ref{lemma_for_strong_invariance}, 
the strong invariance  \eqref{strong_invariance_Fisher_tangent_1} is equivalent to the condition that 
 both the invariance for the metric  \eqref{invariance_Fisher_tangent_1}-\eqref{invariance_Fisher_tangent_2} and the invariance for the co-metric \eqref{invariance_Fisher_cotangent_1}-\eqref{invariance_Fisher_cotangent_2} hold. 
Taking the transpose of both sides of \eqref{strong_invariance_Fisher_tangent_1var}, the strong 
invariance is also expressed 
\be
\psi_{q}^* = (\varphi_{p}^*)^\dagger, 
\label{strong_invariance_Fisher_cotangent_1}
\ee
which means that the Fisher co-metric satisfies
\be
\any \alpha_p\in \cotan{p} (M), \any \beta_q\in \cotan{q} (N), \;\; 
g_{M, p} (\alpha_p, \varphi^* (\beta_q)) = g_{N, q} (\psi^* (\alpha_p), \beta_q).
\label{strong_invariance_Fisher_cotangent_2}
\ee

Since the strong invariance \eqref{strong_invariance_Fisher_cotangent_2} logically implies the 
invariance \eqref{invariance_Fisher_tangent_1var} for the Fisher metric via \eqref{strong_invariance_Fisher_tangent_1}, 
we see that the following proposition, which is stated in a rough form similar to Prop.~\ref{proposition_characterization_co-metric_monotonicity}, is obtained as a corollary of \v{C}encov's theorem (Theorem~\ref{theorem_invariance_metric}). 

\begin{proposition}
 \label{proposition_characterization_co-metric_strong_invariance}
 The strong invariance \eqref{strong_invariance_Fisher_cotangent_2}  characterizes the Fisher co-metric up to a constant factor. 
\end{proposition}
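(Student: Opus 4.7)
\textbf{Proof plan for Proposition~\ref{proposition_characterization_co-metric_strong_invariance}.}
The strategy is to reduce the statement, via the musical isomorphism of Section~\ref{sec_metric_from_co-metric}, to the metric form of \v{C}encov's theorem (Theorem~\ref{theorem_invariance_metric}). As with Proposition~\ref{proposition_characterization_co-metric_monotonicity}, I read the rough statement as the following precise one: for a sequence $\{h_n\}_{n=2}^\infty$ where each $h_n$ assigns to every $p\in \cP_n$ an \emph{inner product} $h_{n,p}$ on $\cotan{p}(\cP_n)$ depending continuously on $p$, if $h$ satisfies the strong invariance \eqref{strong_invariance_Fisher_cotangent_2} with respect to every pair $(\Phi,\Psi)$ of a Markov embedding $\Phi:\cP_m\to\cP_n$ and a Markov co-embedding $\Psi:\cP_n\to\cP_m$ satisfying $\Psi\circ\Phi=\id_{\cP_m}$, then $h_n = c\, g_n$ for some constant $c$ independent of $n$.

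First I would invoke the correspondence of Section~\ref{sec_metric_from_co-metric} to turn each co-metric $h_n$ into a Riemannian metric on $\cP_n$, which I still denote by $h_n$; this is possible because $h_{n,p}$ is a genuine inner product. The strong invariance \eqref{strong_invariance_Fisher_cotangent_2}, in the form \eqref{strong_invariance_Fisher_cotangent_1}, reads
\be
\Psi^*_q = (\Phi^*_p)^\dagger,
\nonumber
\ee
where the adjoint is taken with respect to the cotangent inner products $h_{m,p}$ and $h_{n,q}$ ($p=\Psi(q)$, $q=\Phi(p)$). Taking transposes and using that transposition swaps $(\cdot)^*$ with $(\cdot)_*$ and that the adjoint relative to $(\cotan{p}(M),h_{m,p})$ matches the adjoint relative to $(\tan{p}(M),h_{m,p})$ under the musical isomorphism, this is equivalent to the metric strong invariance $\Psi_{*,q} = (\Phi_{*,p})^\dagger$, i.e., to \eqref{strong_invariance_Fisher_tangent_1var} for the metric $h_n$.

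Next I would extract the ordinary invariance of the metric $h_n$. The Markov-embedding identity $\Psi\circ\Phi=\id_{\cP_m}$ yields $\Psi_{*,q}\circ\Phi_{*,p}=\id$, and substituting the strong invariance relation $\Psi_{*,q}=(\Phi_{*,p})^\dagger$ produces
\be
(\Phi_{*,p})^\dagger \circ \Phi_{*,p}= \id_{\tan{p}(\cP_m)},
\nonumber
\ee
which is the isometry condition \eqref{invariance_Fisher_tangent_2} and hence \eqref{invariance_Fisher_tangent_1var}. Thus the metric family $\{h_n\}$ satisfies condition (ii) of Theorem~\ref{theorem_invariance_metric}. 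The continuity hypothesis on $h_n$ as a co-metric transfers, via the pointwise matrix-inverse relation $[h^{ij}]=[h_{ij}]^{-1}$, to continuity of the induced metric. Applying \v{C}encov's theorem gives a single constant $c$ with $h_n=c\,g_n$ as \emph{metrics} for every $n$. Finally, the correspondence between a metric and its co-metric is linear at each point, so $h_n=c\,g_n$ as co-metrics as well, completing the proof.

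The only genuine subtlety, which I expect to be the main obstacle, is formulating the rough statement precisely: one must assume that $h_{n,p}$ is nondegenerate (taken here as positive-definite) in order to form the associated metric and to apply the adjoint-transposition argument. Without such an assumption the hypothesis \eqref{strong_invariance_Fisher_cotangent_2} is not even meaningful, because the adjoint used in its equivalent form $\Psi^*_q=(\Phi^*_p)^\dagger$ requires inner products on the cotangent spaces. Once the precise formulation is fixed, the argument is a short translation followed by a direct appeal to Theorem~\ref{theorem_invariance_metric}; no analytic step beyond what \v{C}encov already handled is required.
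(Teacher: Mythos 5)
Your proposal is correct and follows essentially the same route as the paper's own proof (Proposition~\ref{prop_strong_invariance_co-metric} in \ref{subsec_exact_proposition_characterization_co-metric_strong_invariance}): use the correspondence of Section~\ref{sec_metric_from_co-metric} to transfer the strong invariance from the cotangent side to the tangent side, observe that together with $\Psi_{*}\circ\Phi_{*}=\id$ it yields the ordinary invariance \eqref{(ii)_theorem_invariance_metric}, and then invoke Theorem~\ref{theorem_invariance_metric}.

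The one place where you diverge from the paper is the precise formulation, and there your reading is strictly weaker than necessary. You assume each $h_{n,p}$ is positive-definite, and you justify this by claiming that otherwise the hypothesis \eqref{strong_invariance_Fisher_cotangent_2} ``is not even meaningful'' because the adjoint in $\Psi^*_q=(\Phi^*_p)^\dagger$ needs inner products. That justification is not right: the strong invariance as written in \eqref{strong_invariance_Fisher_cotangent_2} is a bilinear identity $h_{m,p}(\alpha_p,\Phi^*(\beta_q))=h_{n,q}(\Psi^*(\alpha_p),\beta_q)$, which is meaningful for an arbitrary bilinear form; and the tangent--cotangent correspondence and the adjoint require only non-degeneracy, not positivity. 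The paper accordingly assumes only that $h_{n,p}$ is non-degenerate and symmetric (see Remark~\ref{remark_characterization_co-metric_monotonicity_strong_invariance}), and this extra generality is part of the point: it is what distinguishes the strong-invariance characterization from the monotonicity-based Proposition~\ref{proposition_characterization_co-metric_monotonicity}, where positivity genuinely is needed to transfer the hypothesis to the tangent bundle. Your argument goes through verbatim under the weaker hypothesis once you replace ``inner product'' by ``non-degenerate symmetric bilinear form'' throughout; nothing in the chain $\Psi^*_q=(\Phi^*_p)^\dagger \Rightarrow \Psi_{*,q}=(\Phi_{*,p})^\dagger \Rightarrow (\Phi_{*,p})^\dagger\circ\Phi_{*,p}=\id$ uses positivity, and Theorem~\ref{theorem_invariance_metric} does not assume the tensor fields are positive or even symmetric.
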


\red{
An exact formulation of this proposition will be given in  \ref{subsec_exact_proposition_characterization_co-metric_strong_invariance} of Appendix with a proof based on \v{C}encov's theorem.
}

\begin{remark}
\label{remark_characterization_co-metric_monotonicity_strong_invariance}
The above proposition is stronger than Prop.~\ref{proposition_characterization_co-metric_monotonicity} and weaker than 
Theorem~\ref{theorem_invariance_co-metric}.  
To formulate Prop.~\ref{proposition_characterization_co-metric_monotonicity} and Prop.~\ref{proposition_characterization_co-metric_strong_invariance} 
in exact forms similar to Theorem~\ref{theorem_invariance_co-metric}, it matters what assumptions  
should be imposed on bilinear forms $\{h_{n, p}\}$ on the cotangent spaces prior to the monotonicity or the strong invariance. 
Here we should keep in mind that the significance of these propositions, which are weaker than Theorem~\ref{theorem_invariance_co-metric}, lies in the fact that they follow from \v{C}encov's theorem 
while Theorem~\ref{theorem_invariance_co-metric} does not. 
For Prop.~\ref{proposition_characterization_co-metric_monotonicity}, we need to assume that 
$\{h_{n, p}\}$ are inner products (positive symmetric forms) to ensure that the monotonicity of them is translated 
into the monotonicity of the corresponding inner products on the tangent spaces. For Prop.~\ref{proposition_characterization_co-metric_strong_invariance}, on the other hand, 
we only need %at a minimum 
to assume $\{h_{n, p}\}$ to be non-degenerate (nonsingular) and symmetric. 
%Indeed, if we impose this assumption on $\{h_{n, p}\}$  
%and replace the invariance condition 
%\be
%h_{m, \Psi (q)} (\alpha_{\Psi (q)}, \beta_{\Psi (q)}) = h_{n, q} (\Psi^* (\alpha_{\Psi (q)}), \Psi^* (\beta_{\Psi (q)}))
%\ee
%in condition~(ii) of Theorem~\ref{theorem_invariance_co-metric} with the strong invariance condition 
%\ba
%h_{m, p} (\alpha_{p}, \Phi^* (\beta_{\Phi (p)})) = h_{n, \Phi (p)} (\Psi^* (\alpha_p), \beta_{\Phi (p)}), 
%\label{(ii)_prop_strong_invariance_co-metric}
%\end{align}
%then we can show that condition (ii) thus modified implies that $\{h_{n, p}\}$ coincides with the Fisher co-metric up to 
%a constant factor as a corollary of \v{C}encov's theorem.  
See \ref{subsec_exact_proposition_characterization_co-metric_strong_invariance}  for details. 
%A remark on the necessity of assuming the symmetricity of $\{h_{n, p}\}$ is given in 
%Remark~\ref{remark_appendix_prop_strong_invariance_co-metric} there.
\end{remark}

Let us consider the strong invariance \eqref{strong_invariance_Fisher_cotangent_2} for the case when $(\phi, \psi)$ is a Markov embedding/co-embedding pair $(\Phi, \Psi)$ 
and rewrite it into an identity for the covariance of random variables. 
 Let $\Omega_1$ and $\Omega_2$ be arbitrary 
finite sets satisfying $\vert \Omega_1 \vert \leq \vert \Omega_2 \vert$ and let $\cP_i := \cP (\Omega_i)$, $i=1, 2$. 
Given a surjection $F : \Omega_2 \rightarrow \Omega_1$ and a distribution $q\in \cP_2$, let 
$(\Phi, \Psi)$ be defined by \eqref{Psi_F} and \eqref{Phi_F_rx} with  \eqref{def_rx(y)}. 
For arbitrary $A \in \bR^{\Omega_1}$ and $B \in \bR^{\Omega_2}$, 
let 
$\alpha_{q^F} := \delta_{q^F} (A) \in \cotan{{q^F}} (\cP_1)$ and 
$\beta_q := \delta_q (B) \in \cotan{q} (\cP_2)$, for which 
the strong invariance \eqref{strong_invariance_Fisher_cotangent_2} is represented as
\be
g_{q^F} (\alpha_{q^F}, \Phi^* (\beta_q)) = g_q (\Psi^* (\alpha_{q^F}), \beta_q).
\label{strong_invariance_Fisher_cotangent_Phi_Psi}
\ee
Recalling \eqref{Psi*_delta_A_F}, we have
\be
\Psi^* (\alpha_{q^F}) = \delta_q (A\circ F).
\ee
On the other hand, $\Phi$ is represented as $\Phi = \Phi_V$ by 
the channel $V$ defined by
\be
V (y\,\vert\, x) = \left\{\begin{array}{ll}
r_x (y) = \frac{q(y)}{q^F (x)}&\; \text{if}\;\; x = F(y) \\
0 &\; \text{otherwise}, 
\end{array}\right.
\ee
so that \eqref{Phi*_delta_Phi(p)_A} 
yields
\be
\Phi^* (\beta_q) = \delta_{q^F} (E_V (B\, \vert\, \cdot )).
\ee
Hence, the strong invariance  \eqref{strong_invariance_Fisher_cotangent_Phi_Psi} is 
rewritten as
\be
\Cov_{q^F} (A, E_V (B\,\vert\, \cdot)) = \Cov_q (A\circ F, B).
\label{strong_invariance_covariance} 
\ee
This identity can be verified directly as follows. 
Letting $a := \expect{A}_{q^F} = \expect{A\circ F}_q$ and 
$b:= \expect{B}_q = \expect{E_V (B\, \vert\, \cdot)}_{q^F}$, we have
\ba
\text{RHS} & = \sum_y q(y) (A (F(y)) - a) (B(y) -b)
\noret &= 
\sum_x \sum_{y\in F^{-1} (x)} q(y) (A (F(y)) - a) (B(y) -b) 
\noret &= \sum_x \,  (A (x) - a) \sum_{y\in F^{-1} (x)} q(y) (B(y) -b) 
\noret &= \sum_x q^F (x) (A (x) - a)   (E_V (B\,\vert\, x) - b) 
\noret &=\text{LHS} , 
\end{align}
where the fourth equality follows from 
\ba
E_V (B\, \vert\, x) = \sum_y V (y\,\vert\, x) B(y) 
%& = 
%\sum_{y\in F^{-1} (x)} r_x (y) B(y) 
%\noret 
&= \frac{1}{q^F (x)} 
\sum_{y\in F^{-1} (x)} q (y) B(y).
\end{align}
Note that \eqref{invariance_covariance} is obtained from 
\eqref{strong_invariance_covariance} 
by substituting $B\circ F$ for $B$. 

\section{Weak invariance for affine connections}
\label{sec_invariance_connection}

In addition to characterizing the Fisher metric by the invariance with respect to 
Markov embeddings, \v{C}encov also 
gave a characterization of the $\alpha$-connections by the invariance condition. 
In this section we show that a similar notion to the strong invariance of metrics, which is described in terms of Markov embedding/co-embedding pairs,  can be considered for affine connections. 

Let $\Omega_1, \Omega_2$ be arbitrary finite sets satisfying $2\leq \vert \Omega_1 \vert \leq \vert \Omega_2 \vert$, 
and let $\Phi : \cP_1 \rightarrow \cP_2$ be a Markov embedding, where $\cP_i := \cP (\Omega_i)$, $i=1, 2$. 
Suppose that affine connections 
$\nabla$ and $\nabla'$ are given on $\cP_1$ and $\cP_2$, respectively.
When these connections are the $\alpha$-connection on $\cP_1$ and that on $\cP_2$ 
for some common $\alpha\in \bR$, they satisfy
\be
\any X, Y \in \vf (\cP_1), \; \; \Phi_* (\nabla_X Y) = \nabla'_{\Phi_* (X)} \Phi_* (Y).
\label{invariance_connection}
\ee
Some remarks on the meaning of the above equation are in order. First, 
we define $\Phi_* (X)$ for an arbitrary vector field $X$ on $\cS_1$ as 
a vector field on $K:=  \Phi (\cP_1)$ that maps each point $q=\Phi (p)\in K$, where $p\in \cP_1$, to 
\be
(\Phi_* (X))_q := \Phi_{*, p} (X_p) \in \tan{q} (K).
\ee
Since $K$ is a {\subm} of $\cS_2$ on which the connection $\nabla'$ is given, 
 $\nabla'_{\Phi_* (X)} \Phi_* (Y)$ in 
\eqref{invariance_connection} is defined as a map which maps each point 
$q\in K$ to a tangent vector in $\tan{q}(\cP_2)$, although  $\nabla'_{\Phi_* (X)} \Phi_* (Y)$ 
does not belong to $\vf (K)$ in general. 
%unless  $K$ is autoparallel  in $\cP_2$ with respect to $\nabla'$.
The condition \eqref{invariance_connection} means that $K$ is autoparallel in $\cP_2$ with respect to
$\nabla'$ and that the restricted connection of $\nabla'$ induced on the autoparallel $K$ 
is obtained from $\nabla$ by the diffeomorphism $\Phi : \cP_1 \rightarrow K$. 
\v{C}encov \cite{cencov} showed that the invariance condition characterizes 
the family $\{\text{$\alpha$-connection}\}_{\alpha\in \bR}$ by a  formulation similar 
to Theorem~\ref{theorem_invariance_metric}. 

\begin{remark}
As is mentioned above, 
the fact that $\{\text{$\alpha$-connection}\}_{\alpha\in \bR}$ satisfy the invariance \eqref{invariance_connection} 
implies that $K = \Phi (\cP_1)$ is autoparallel in $\cP_2$ w.r.t.\ the $\alpha$-connection 
for every $\alpha\in \bR$. A kind of converse result is found in \cite{nagaoka_isit2017}, which states that 
if a submanifold $K$ of $\cP_2 = \cP (\Omega_2)$ is autoparallel w.r.t.\ the $\alpha$-connection 
for every $\alpha\in \bR$ (or, for some two different values of $\alpha$), then 
$K$ is represented as $\Phi (\cS_1)$ by some Markov embedding $\Phi$ from some $\cS_1 = \cS (\Omega_1)$ 
into $\cS_2$. 
\end{remark}

Let $g$ and $g'$ be the Fisher metrics on $\cP_1$ and $\cP_2$, respectively. 
Noting that $\Phi_*$ is an isometry with respect to these metrics due to the 
invariance of the Fisher metric, 
\eqref{invariance_connection} implies that
\be
\any X, Y, Z\in \vf (\cS_1), \;\; g(\nabla_X Y , Z) = g' (\nabla'_{\Phi_* (X)} \Phi_* (Y), \, \Phi_* (Z)) \circ \Phi, 
\label{weak_invariance_connection_1}
\ee
where the RHS denotes the function on $\cP_1$ such that
\be
p \mapsto g'_{\Phi (p)} \bigl( \bigl( \nabla'_{\Phi_* (X)} \Phi_* (Y)\bigr)_{\Phi (p)}, \, 
\Phi_{*, p}  (Z_p) \bigr).
\ee
Since \eqref{weak_invariance_connection_1} is apparently weaker than 
\eqref{invariance_connection}, we call this property the \emph{weak invariance} 
of the connections $\nabla, \nabla'$. 
In an actual fact, however, 
as is mentioned in \cite{fujiwara_hommage} and is proved in 
\cite{fujiwara_japanese}, the weak invariance \eqref{weak_invariance_connection_1} 
characterizes 
the family $\{\text{$\alpha$-connection}\}_{\alpha\in \bR}$ as well as the stronger condition \eqref{invariance_connection}. 

Now, recalling the strong invariance \eqref{strong_invariance_Fisher_tangent_1} of the Fisher metric, we see that 
the weak invariance \eqref{weak_invariance_connection_1} is equivalent to
\be
\any X, Y \in \vf (\cS_1), \;\; 
 \nabla_X Y = \Psi_* \bigl( \nabla'_{\Phi_* (X)} \Phi_* (Y) \bigr), 
 \label{weak_invariance_connection_2}
\ee
where the RHS denotes the map
\be
\cP_1 \ni p \mapsto \Psi_{*, \Phi (p)}  \bigl(  \bigl( \nabla'_{\Phi_* (X)} \Phi_* (Y) \bigr)_{\Phi (p)} \bigr).
\ee
The fact that the weak invariance is a condition on connections is more clearly expressed in \eqref{weak_invariance_connection_2} than \eqref{weak_invariance_connection_1} 
in that \eqref{weak_invariance_connection_2} does not include the Fisher metric. 

\section{Concluding remarks}
\label{sec_concluding}

In this paper we have focused on the face of the Fisher metric as a metric on the cotangent bundle, calling it the Fisher co-metric to distinguish it from the original Fisher metric on the tangent bundle.  What we have shown are 
listed below. 
\begin{enumerate}
\item Based on a correspondence between cotangent vectors and random variables, 
the Fisher \red{co-metric} is defined via the variance/covariance in a natural way  (Section~\ref{sec_Fisher_co-metric}). 
\item The Cram\'{e}r-Rao inequality is trivialized by considering the Fisher co-metric 
(Section~\ref{sec_submanifold}). 
\item The role of the m-connection as a connection on the cotangent bundle is  
important in considering the achievability condition for the Cram\'{e}r-Rao inequality 
(Section~\ref{sec_em_connections}). 
\item The monotonicity of the Fisher metric is equivalently translated into the monotonicity of 
the Fisher co-metric and that of the variance (Section~\ref{sec_monotonicity}). 
\item The invariance of the Fisher metric and that of the Fisher co-metric are not logically equivalent, 
and a new \v{C}encov-type theorem for characterizing the Fisher co-metric by the invariance is established, 
which can also be regarded as a theorem for characterizing the variance/covariance (Section~\ref{sec_invariance}). 
\item The notion of strong invariance is introduced, which combines  
the invariance of the Fisher metric and that of the Fisher co-metric (Section~\ref{sec_strong_invariance}). 
\item The weak invariance of the $\alpha$-connections is expressed in a 
formulation similar to the strong invariance \red{of the Fisher metric/co-metric}  
 (Section~\ref{sec_invariance_connection}). 
\end{enumerate}

\medskip

It should be noted that, although this paper emphasizes the importance of the Fisher co-metric, this does not 
diminish the importance of the Fisher metric at all.  Apart from the importance 
as a metric on the tangent bundle itself, which is essential for geometry of statistical manifolds, 
 we should not forget that the Fisher information matrix (i.e.\ the components of the Fisher metric) 
is of primary importance as a practical tool to compute the Fisher co-metric. 
Even if we know that $g_{M}^{ij} (p) = g_{M, p} ((d\xi^i)_p, (d\xi^j)_p)$ in 
\eqref{g^ij_M_p} can be defined by 
\eqref{eq1_prop_norm_submanifold} 
and \eqref{eq2_prop_norm_submanifold}  
and that understanding $g_{M}^{ij} (p)$ 
in this way is important for conceptual understanding of the 
Cram\'{e}r-Rao inequality, this does not tell us a method to 
compute $g_{M}^{ij} (p)$ for a given statistical model $(M, \xi)$ 
better than computing the inverse of the Fisher information 
matrix $G_M (p) := [g_{M, ij} (p) ]$ in general. 

Finally, we  note that some of the results obtained here can be extended to the quantum case in several directions, 
which will be discussed in a forthcoming paper. 

%===========================================================================================%%
%% If you are submitting to one of the Nature Portfolio journals, using the eJP submission   %%
%% system, please include the references within the manuscript file itself. You may do this  %%
%% by copying the reference list from your .bbl file, paste it into the main manuscript .tex %%
%% file, and delete the associated \verb+\bibliography+ commands.                            %%
%%===========================================================================================%%

\section*{Acknowledgments}
This work was partly supported by JSPS KAKENHI Grant Numbers 23H05492.

\renewcommand{\thesection}{A} 
 \section*{Appendix}
 \label{sec_appendix}
\renewcommand{\thesubsection}{\thesection\arabic{subsection}}
\setcounter{subsection}{0}

\subsection{Proof of  Theorem~\ref{theorem_invariance_covariance}}
\label{subsec_proof_theorem_invariance_covariance}

As mentioned in Remark~\ref{remark_theorem_invariance_L2}, we show the 
equivalence (i)${}'$ $\Leftrightarrow$ (ii-2), with expressing (i)${}'$ as
\smallskip 
\begin{enumerate}
\item[(i)${}'$]
$
\some c_1, \some c_2\in \bR, \, \any n,  \any p\in \cP_n, \;
\gamma_{n, p} = c_1\, \inprod{\cdot}{\cdot}_p + c_2\, \expect{\cdot}_p \expect{\cdot}_p.
$
%\\ \qquad 
%$ \any A, B\in \bR^{\Omega_n}, 
%\;\; \gamma_{n, p}(A, B)  = c_1\, \inprod{A}{B}_p + c_2\, \expect{A}_p \expect{B}_p.
%$
\end{enumerate}
\smallskip
The implication  (i)${}'$ $\Rightarrow$ (ii-2) is obvious. To show the converse, we assume (ii-2) and will 
derive (i)${}'$ by several steps.
The uniform distribution on $\Omega_n$ is denoted by $u_n$ in the sequel. 
\medskip
\begin{enumerate}
%%%%%%
\item[(a)] $\any n, \,\some c_{1,n}, \some c_{2, n} \in \bR, \; 
\gamma_{n, u_n} = c_{1,n} \,  \inprod{\cdot}{\cdot}_{u_n} + c_{2,n} \, \expect{\cdot}_{u_n} \expect{\cdot}_{u_n}.
$
\begin{proof}
Fix $n\geq 2$ arbitrarily, and let  $m=n$, and $F$ be a permutation on $\Omega_n$. 
Define $e_{n, i}\in \bR^{\Omega_n}$ for $i\in \Omega_n$ by $e_{n, i} (j) = \delta_{i, j}$. 
Then ${u_n}^F = u_n$, and condition (ii-2) claims that 
\ba
\any i, \any j\in \Omega_n, \; \gamma_{n, u_n} (e_{n, i}, e_{n, j}) & 
=  \gamma_{n, u_n}(e_{n, i}\circ F, e_{n, j}\circ F)  
\noret & = \gamma_{n, u_n} (e_{n, F^{-1}(i)}, e_{n, F^{-1}(j)}).
\nonumber
\end{align}
Since this holds for any permutation $F$, there exists $a_n, b_n \in \bR$ such that
\be
 \any i, \any j\in \Omega_n, \;\; 
 \gamma_{n, u_n} (e_{n, i}, e_{n, j}) = a_n \delta_{i, j} + b_n.
 \nonumber
\ee
Comparing this with
\be
\inprod{e_{n, i}}{e_{n, j}}_{u_n} = \frac{1}{n}  \delta_{i, j} \quad\text{and}\quad
\expect{e_{n, i}}_{u_n} \expect{e_{n, j}}_{u_n} = \frac{1}{n^2}, 
\nonumber
\ee
we have the assertion of (a) with letting $c_{1,n} := n a_n$ and $c_{2,n} := n^2 b_n$.
\end{proof}
%%%%%%
\item[(b)] $\some c_1, \some c_2\in \bR, \, \any n, \; 
\gamma_{n, u_n} = c_1\, \inprod{\cdot}{\cdot}_{u_n} + c_2\, \expect{\cdot}_{u_n} \expect{\cdot}_{u_n}$. 
\begin{proof}
It suffices to show that the constants $\{c_{1,n}\}$ and $\{c_{2,n}\}$ 
 in (a) satisfy $c_{1, n} = c_{1, m}$ and $c_{2, n} = c_{2, m}$ for every $n, m \geq 2$. 
 Let $F: \Omega_{mn} \rightarrow \Omega_n$ be defined by
\be F(i) = k \quad \text{if} \quad (k-1)\, m + 1 \leq i \leq km, \;\; k\in \bN.
\nonumber
\ee
Then we have $u_{mn}^F = u_n$, and (ii-2) yields 
\be
\any A, \any B\in \bR^{\Omega_n}, \;\; \gamma_{n, u_n} (A, B) = \gamma_{mn, u_{mn}} (A\circ F, B\circ F).
\nonumber
\ee
Due to (a), the LHS of the above equation is represented as
\be
 \gamma_{n, u_n} (A, B) = c_{1, n} \inprod{A}{B}_{u_n} + c_{2, n} \expect{A}_{u_n}  \expect{B}_{u_n}, 
 \nonumber
\ee
while the RHS is represented as
\ba
 \gamma_{mn, u_{mn}} (A\circ F, B\circ F) & = 
  c_{1, mn} \inprod{A\circ F}{B\circ F}_{u_{mn}} + c_{2, mn} \expect{A\circ F}_{u_{mn}}  \expect{B\circ F}_{u_{mn}}
  \noret 
 & = c_{1, mn} \inprod{A}{B}_{u_{n}} + c_{2, mn} \expect{A}_{u_{n}}  \expect{B}_{u_{n}}.
 \nonumber
\end{align}
Thus we have 
$c_{1,n} = c_{1, mn}$ and $c_{2,n} = c_{2, mn}$. 
 Similarly, we can show that $c_{1,m} = c_{1, mn}$ and $c_{2,m} = c_{2, mn}$. 
 Hence we obtain $c_{1, n} = c_{1, m}$ and $c_{2, n} = c_{2, m}$.
\end{proof}
%%%%%%
\item[(c)] For any $n\geq 2$ and any $p\in \cP_n$ whose components $\{p(i)\}_{i\in \Omega_n}$ are all rational numbers, 
we have $\gamma_{n, p}  = c_1\, \inprod{\cdot}{\cdot}_p + c_2 \expect{\cdot}_p \expect{\cdot}_p $, 
where $c_1$ and $c_2$ are the constants in (b). 
\begin{proof}
Suppose that the components of $p$ are represented as
$p(i) = k_i / m$ by natural numbers $m, k_1, \ldots \red{,}\, k_n$ satisfying $\sum_{i=1}^n k_i = m$.  
Let $\Omega_m = \bigsqcup_{i=1}^n K_i$ be a partition of $\Omega_m$ such that 
$\vert K_i \vert = k_i$, and define $F : \Omega_m \rightarrow \Omega_n$ by
\be
F(j) = i \quad \text{if}\quad j \in K_i.
\nonumber
\ee
Then we have $u_m^F = p$. 
Letting $\gamma'_{n, p} := c_1\, \inprod{\cdot}{\cdot}_p + c_2 \expect{\cdot}_p \expect{\cdot}_p$ and 
noting that both $\{\gamma_{n, p}\}$ and $\{\gamma'_{n, p}\}$ satisfy condition (ii-2),  we obtain  
\ba
\any A, B\in \bR^{\Omega_n}, \;\; 
\gamma_{n, p} (A, B) & = \gamma_{m, u_m} (A\circ F, B\circ F) 
\noret 
&=\gamma'_{m, u_m} (A\circ F, B\circ F)  
\noret 
&= \gamma'_{n, p} (A, B), 
\nonumber
\end{align}
where the second equality follows from (b). 
\end{proof}
%%%%%%
\item[(d)] (i)${}'$ is derived from (c) and the continuity of $p\mapsto \gamma_{n, p}$.  
\end{enumerate}
\hfill (QED)

\subsection{An exact form of Proposition~\ref{proposition_characterization_co-metric_strong_invariance}} 
\label{subsec_exact_proposition_characterization_co-metric_strong_invariance}

We present a proposition of \v{C}encov-type that can be regarded as an exact version of Proposition~\ref{proposition_characterization_co-metric_strong_invariance}, and 
show that the proposition follows from Theorem~\ref{theorem_invariance_metric} (\v{C}encov's theorem). 

\begin{proposition}
\label{prop_strong_invariance_co-metric}
For $n=2, 3, \ldots$, let $\Omega_n:= \{1, 2, \ldots , n\}$ and $\cP_n:= \cP (\Omega_n)$, and let 
$g_n$ be the Fisher co-metric on $\cP_n$.  Suppose that we are given a sequence $\{h_n\}_{n=2}^\infty$, where 
$h_n$ is a contravariant tensor field of degree 2 on $\cP_n$ which  continuously maps each point $p\in \cP_n$ to a 
non-degenerate and symmetric bilinear form $h_{n, p} : \cotan{p} (\cP_n)^2 \rightarrow \bR$.  Then the following two conditions are equivalent.
\begin{enumerate}
\item[(i)] $\some c\in \bR\setminus\{0\}, \, \any n, \;\; h_n = c g_n$.
\item[(ii)] For any $m\leq n$ and any pair $(\Phi, \Psi)$ of 
Markov embedding $\Phi :\cP_m\rightarrow \cP_n$ and Markov co-embedding $\Psi : \cP_n \rightarrow \cP_m$ satisfying 
$\Psi \circ \Phi = \id_{\cP_m}$,  it holds that
\ba
\any p\in \cP_m,\, & \any \alpha_{p} \in \cotan{p} (\cP_m), 
\any \beta_{\Phi(p)}  \in \cotan{\Phi(p)} (\cP_n), 
\noret & 
h_{m, p} (\alpha_{p}, \Phi^* (\beta_{\Phi (p)})) = h_{n, \Phi (p)} (\Psi^* (\alpha_p), \beta_{\Phi (p)}).
\label{(ii)_prop_strong_invariance_co-metric_appendix}
\end{align}
\end{enumerate} 
\end{proposition}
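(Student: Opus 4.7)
The plan is to reduce the implication (ii) $\Rightarrow$ (i) to \v{C}encov's theorem (Theorem~\ref{theorem_invariance_metric}), while (i) $\Rightarrow$ (ii) is essentially already contained in Section~\ref{sec_strong_invariance}. For the easy direction, I would specialize the strong invariance \eqref{strong_invariance_Fisher_cotangent_2} from Proposition~\ref{prop_strong_invariance_metric} to $(\varphi, \psi) = (\Phi, \Psi)$, where $(\Phi, \Psi)$ is any Markov embedding/co-embedding pair satisfying $\Psi \circ \Phi = \id_{\cP_m}$ (matching the hypothesis \eqref{psi_varphi_id}); the identity \eqref{(ii)_prop_strong_invariance_co-metric_appendix} then follows for $h_n = g_n$ and extends by linearity to $h_n = c\, g_n$.

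For the harder direction (ii) $\Rightarrow$ (i), the key idea is to convert $\{h_n\}$ into a family of metrics $\{\tilde h_n\}$ on the tangent bundles and then show that condition (ii) forces $\{\tilde h_n\}$ to satisfy the standard Markov-embedding invariance of \v{C}encov's theorem. Since each $h_{n,p}$ is non-degenerate and symmetric, the correspondence $X_p \correspond{h_{n,p}} \alpha_p$ defined by $\beta_p(X_p) = h_{n,p}(\alpha_p, \beta_p)$ for all $\beta_p$ (as in \eqref{X_gp_alpha_2}) is a linear isomorphism $\cotan{p}(\cP_n) \to \tan{p}(\cP_n)$, and I would define the tangent-space bilinear form $\tilde h_{n,p}(X_p, Y_p) := h_{n,p}(\alpha_p, \beta_p)$ whenever $X_p \correspond{h_{n,p}} \alpha_p$ and $Y_p \correspond{h_{n,p}} \beta_p$ (cf.~\eqref{gp_XY_alphabeta}). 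The form $\tilde h_{n,p}$ is again non-degenerate and symmetric, and inherits continuity in $p$ from $h_n$ via continuity of matrix inversion on the invertible set. Given a Markov embedding/co-embedding pair $(\Phi, \Psi)$ with $\Psi \circ \Phi = \id_{\cP_m}$, I would then rewrite \eqref{(ii)_prop_strong_invariance_co-metric_appendix} in terms of $\tilde h$: letting $X_p$ and $Y_{\Phi(p)}$ correspond to $\alpha_p$ and $\beta_{\Phi(p)}$ under $\correspond{h_{m,p}}$ and $\correspond{h_{n,\Phi(p)}}$, the LHS rewrites as $(\Phi^* \beta_{\Phi(p)})(X_p) = \beta_{\Phi(p)}(\Phi_* X_p) = \tilde h_{n,\Phi(p)}(Y_{\Phi(p)}, \Phi_* X_p)$, while the RHS rewrites as $(\Psi^* \alpha_p)(Y_{\Phi(p)}) = \alpha_p(\Psi_* Y_{\Phi(p)}) = \tilde h_{m,p}(X_p, \Psi_* Y_{\Phi(p)})$. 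Condition (ii) is therefore equivalent to the tangent-space strong-invariance identity $\tilde h_{m,p}(X_p, \Psi_* Y_{\Phi(p)}) = \tilde h_{n,\Phi(p)}(\Phi_* X_p, Y_{\Phi(p)})$ for all $X_p, Y_{\Phi(p)}$.

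Specializing this identity to $Y_{\Phi(p)} = \Phi_* X'_p$ with $X'_p \in \tan{p}(\cP_m)$ and using $\Psi_* \circ \Phi_* = \id_{\tan{p}(\cP_m)}$ (a consequence of $\Psi \circ \Phi = \id_{\cP_m}$) collapses it to the Markov-embedding invariance $\tilde h_{m,p}(X_p, X'_p) = \tilde h_{n,\Phi(p)}(\Phi_* X_p, \Phi_* X'_p)$, i.e.\ condition (ii) of Theorem~\ref{theorem_invariance_metric} for the family $\{\tilde h_n\}$. Applying \v{C}encov's theorem yields $\tilde h_n = c\, g_n$ as metrics for some $c \in \bR$; the non-degeneracy of $\tilde h_n$ forces $c \ne 0$, and since the $\correspond{\cdot}$ isomorphism inverts the scaling constant when passing between a metric and its co-metric (a basis-free statement of $(cH)^{-1} = c^{-1} H^{-1}$), we obtain $h_n = c^{-1} g_n$ as co-metrics, giving (i). I expect the main obstacle to be the bookkeeping in the translation step: one must verify that the identifications via $\correspond{h_{m,p}}$ and $\correspond{h_{n,\Phi(p)}}$ really intertwine with $\Phi_*, \Phi^*, \Psi_*, \Psi^*$ so that the rewriting is valid, and that $\Psi_* \circ \Phi_* = \id$ is enough to collapse strong invariance to standard Markov-embedding invariance; once this linear-algebraic reduction is in hand, the rest is a direct invocation of Theorem~\ref{theorem_invariance_metric}.
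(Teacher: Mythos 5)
Your proposal is correct and follows essentially the same route as the paper's own proof: (i) $\Rightarrow$ (ii) via the already-established strong invariance of the Fisher co-metric, and (ii) $\Rightarrow$ (i) by using the non-degenerate correspondence $\correspond{h_{n,p}}$ to transfer the identity to the tangent spaces, collapsing it with $\Psi_* \circ \Phi_* = \id$ to the Markov-embedding invariance, and invoking Theorem~\ref{theorem_invariance_metric}. The small points you flag (continuity of $\tilde h_n$, the inversion of the constant, and $c\neq 0$ from non-degeneracy) are handled correctly and are only glossed over in the paper.
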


\begin{proof} 
Since the Fisher co-metric satisfies \eqref{strong_invariance_Fisher_cotangent_2}, 
we have (i) $\Rightarrow$ (ii). 
To prove the converse, assume that $\{h_n\}_{n=2}^\infty$ satisfies (ii). 
Due to the assumption that $h_{n, p}$ is non-degenerate for every $n$ and $p\in \cP_n$, 
$h_{n, p}$ induces the correspondence 
$\stackrel{h_{n, p}}{\longleftrightarrow}$ between $\tan{p} (\cP_n)$ and $\cotan{p} (\cP_n)$ 
together with the bilinear form $h_{n, p} : \tan{p} (\cP_n)^2 \rightarrow \bR$ 
as in the case of inner products by
\ba
X_p \stackrel{h_{n, p}}{\longleftrightarrow} \alpha_p \; & \Leftrightarrow \; 
\any \beta_p \in \cotan{p} (\cP_n), \; \beta_p (X_p) = h_{n, p} (\alpha_p, \beta_p)
%\noret 
%& \Leftrightarrow \; \any Y_p \in \tan{p} (\cP_n), \; \alpha_p (Y_n) = h_{n, p} (X_p, Y_p)
%\label{def_correspondence_h}
\nonumber
\end{align}
and 
\be
X_p \stackrel{h_{n, p}}{\longleftrightarrow} \alpha_p 
\;\; \text{and} \;\;
Y_p \stackrel{h_{n, p}}{\longleftrightarrow} \beta_p 
\; \Rightarrow \;
h_{n, p} (X_p, Y_p) = h_{n, p} (\alpha_p, \beta_p). 
%\label{def_metric_h}
\nonumber
\ee
Then, similar to the equivalence between \eqref{strong_invariance_Fisher_tangent_1} and \eqref{strong_invariance_Fisher_cotangent_2}, we see that 
\eqref{(ii)_prop_strong_invariance_co-metric_appendix} is equivalent to 
\ba
\any p\in \cP_m,\, & \any X_{p} \in \tan{p} (\cP_m), 
\any Y_{\Phi(p)}  \in \tan{\Phi(p)} (\cP_n), 
\noret & 
h_{m, p} (X_{p}, \Psi_* (Y_{\Phi (p)})) = h_{n, \Phi (p)} (\Phi_* (X_p),  Y_{\Phi (p)}).
\label{(ii)_prop_strong_invariance_co-metric_equiv}
\end{align}
Indeed, if 
\be
X_p \stackrel{h_{m, p}}{\longleftrightarrow} \alpha_p \;\; \text{and} \;\;
Y_{q} \stackrel{h_{n, q}}{\longleftrightarrow} \beta_q, \quad\text{where}\quad 
q := \Phi (p), 
\nonumber
\ee
then
\be
h_{m, p} (\alpha_p, \Phi^* (\beta_q)) = \Phi^* (\beta_q) (X_p) = \beta_q (\Phi_* (X_p)) 
= h_{n, q} (\Phi_* (X_p), Y_q)
\nonumber
\ee
and
\be
h_{n, q} (\Psi^* (\alpha_p), \beta_{q}) = \Psi^* (\alpha_p) (Y_q) = 
\alpha_p ( \Psi_* (Y_q)) = h_{m, p} (X_p, \Psi_* (Y_q)), 
\nonumber
\ee
which shows the equivalence of 
\eqref{(ii)_prop_strong_invariance_co-metric_appendix}
and \eqref{(ii)_prop_strong_invariance_co-metric_equiv}. 
Due to $\Psi_* \circ \Phi_* = \id_{\tan{p}  (\cP_m)}$, Eq.~\eqref{(ii)_prop_strong_invariance_co-metric_equiv} implies \eqref{(ii)_theorem_invariance_metric} in 
(ii) of Theorem~\ref{theorem_invariance_metric}. 
Thus we have the following train of implications:
\ba
\text{(ii)} \; & \Rightarrow \; \text{(ii) of Theorem~\ref{theorem_invariance_metric}} 
\Rightarrow \; \text{(i) of Theorem~\ref{theorem_invariance_metric}} 
\Rightarrow \; \text{(i)}.
\nonumber
\end{align}
\end{proof}

%\begin{remark} \, 
%\label{remark_appendix_prop_strong_invariance_co-metric}
%Let us modify the statement of the proposition by only assuming that $\{h_{n,p}\}$ are non-degenerate, 
%not assuming their symmetricity, and by replacing 
%\be
%h_{m, p} (\alpha_{p}, \Phi^* (\beta_{\Phi (p)})) = h_{n, \Phi (p)} (\Psi^* (\alpha_p), \beta_{\Phi (p)})
%\label{(ii)_prop_strong_invariance_co-metric_appendix_part}
%\ee
%in \eqref{(ii)_prop_strong_invariance_co-metric_appendix} 
%with
%\be
%h_{m, p} (\alpha_{p}, \Phi^* (\beta_{\Phi (p)})) = h_{n, \Phi (p)} (\beta_{\Phi (p)}, \Psi^* (\alpha_p)). 
%\label{(ii)_prop_strong_invariance_co-metric_appendix_part_var} 
%\ee
%Then, since these equations are equivalent for symmetric $\{h_{n,p}\}$, we have 
% (i) $\Rightarrow$ (ii).  
%\end{remark}

% \bibliography{sn-bibliography}% common bib file
%% if required, the content of .bbl file can be included here once bbl is generated
%%\input sn-article.bbl

%% Default %%
%%\input sn-sample-bib.tex%

\end{document}